\newtheorem{theorem}{Theorem}
\newtheorem{lemma}{Lemma}
\theoremstyle{plain}
\newtheorem{definition}{Definition}
\begin{document}
\bibliographystyle{IEEE2}
\title{Cloud/Fog Computing Resource Management and Pricing for Blockchain Networks}
\author{Zehui~Xiong, Shaohan~Feng, Wenbo Wang, Dusit~Niyato,~\IEEEmembership{Fellow,~IEEE,} \\Ping~Wang,~\IEEEmembership{Senior Member,~IEEE,} and Zhu Han, \IEEEmembership{Fellow,~IEEE}\\
\thanks{An earlier version of this paper was accepted by IEEE ICC in~\cite{xiong2017optimal}.}
\thanks{Z. Xiong, S. Feng, W. Wang, and D. Niyato are with School of Computer Science and Engineering, Nanyang Technological University, Singapore. P. Wang is with Department of Electrical Engineering and Computer Science, York University, Canada. Z. Han is with the University of Houston, Houston, USA, and also with the Department of Computer Science and Engineering, Kyung Hee University, Seoul, South Korea.}
\thanks{Copyright (c) 2012 IEEE. Personal use of this material is permitted. However, permission to use this material for any other purposes must be obtained from the IEEE by sending a request to pubs-permissions@ieee.org.}}
\maketitle
\begin{abstract}
Public blockchain networks using proof of work-based consensus protocols are considered a promising platform for decentralized resource management with financial incentive mechanisms. In order to maintain a secured, universal state of the blockchain, proof of work-based consensus protocols financially incentivize the nodes in the network to compete for the privilege of block generation through cryptographic puzzle solving. For rational consensus nodes, i.e., miners with limited local computational resources, offloading the computation load for proof of work to the cloud/fog providers becomes a viable option. In this paper, we study the interaction between the cloud/fog providers and the miners in a proof of work-based blockchain network using a game theoretic approach. In particular, we propose a lightweight infrastructure of the proof of work-based blockchains, where the computation-intensive part of the consensus process is offloaded to the cloud/fog. We formulate the computation resource management in the blockchain consensus process as a two-stage Stackelberg game, where the profit of the cloud/fog provider and the utilities of the individual miners are jointly optimized. In the first stage of the game, the cloud/fog provider sets the price of offered computing resource. In the second stage, the miners decide on the amount of service to purchase accordingly. We apply backward induction to analyze the sub-game perfect equilibria in each stage for both uniform and discriminatory pricing schemes. For uniform pricing where the same price applies to all miners, the uniqueness of the Stackelberg equilibrium is validated by identifying the best response strategies of the miners. For discriminatory pricing where the different prices are applied, the uniqueness of the Stackelberg equilibrium is proved by capitalizing on the variational inequality theory. Further, the real experimental results are employed to justify our proposed model.
\end{abstract}
\begin{IEEEkeywords}
Computation offloading, blockchain, proof-of-work, pricing, game theory, variational inequalities.
\end{IEEEkeywords}

\section{Introduction}\label{Sec:Introduction}

Blockchain networks were first designed to be the backbone of a distributed, permissionless/public database for recording the transactional data of cryptocurrencies in a tamper-proof and totally ordered manner~\cite{Bitcoin, wang2018survey}. The blockchain network is essentially organized as a virtual overlay Peer-to-Peer (P2P) network, where the database state is maintained in a purely decentralized manner and any node in the network is allowed to join the state maintenance process without the need of identity authentication. As indicated by the name ``blockchain'', the records of transactions between nodes in the network are organized in a data structure known as the ``block''. A series of blocks are arranged in a strictly increasing-time order by a linked-list-like data structure known as the chain of blocks (i.e., ``blockchain''). The blockchain is maintained as the appending-only local replicas by the nodes participating in the replicated consensus process. Unlike the traditional distributed ledger systems using the Practical Byzantine Faulty-Tolerant (PBFT)~\cite{castro1999practical} or Paxos~\cite{ongaro2014search} protocols, a permissionless blockchain network no longer needs any centralized authorities (e.g., authenticating/authorizing servers) and is able to accommodate a much larger number of consensus nodes in the network~\cite{Vukolic2016}. Such an objective is achieved by blockchain networks with the Nakamoto consensus protocol~\cite{Bitcoin} (or protocols alike). Per the Nakamoto protocol, financial incentive is introduced into the consensus process to ensure that the best strategies of the pseudonymous consensus nodes is to follow the given rules of blockchain maintenance/extension. Otherwise they will suffer from monetary loss.

The core component of the Nakamoto consensus protocol is a computation-intensive process known as Proof of Work (PoW). For the consensus nodes that propose their local blockchain view to be the new state of the blockchain database, PoW requires them to solve a cryptographic puzzle, i.e., to find a partial preimage satisfying certain conditions of a hash mapping based on the proposed blockchain state. According to~\cite{Garay2015}, a typical PoW process is executed in the following steps. First, with an input contribution function, a consensus node validates and bundles a sub-set of unconfirmed transactions into a new block. Then, the consensus node computes the PoW solution to the cryptographic puzzle, which is formed based on the value of the new block. Immediately after the puzzle solution is obtained, the consensus node broadcasts the new block to the entire network as its own proposal of the new blockchain head. On the other hand, the rest of nodes in the network run a chain validation-comparison function to determine whether to accept such a proposal or not. In the blockchain network, an honest consensus node follows ``the-longest-chain'' rule and adopts the longest one among the received blockain proposals to update its local view of the blockchain state. In such a process, the nodes that devote their computational resources to the generation of new blocks (i.e., PoW solutions) are also known as the block ``miners''. This is mainly because according to the Nakamoto protocol, a certain amount of blockchain tokens will be awarded to the node that has its proposed blockchain state accepted by the majority of the network. The theoretic proof and analysis for secure and private communication with the Nakamoto protocol can be found in~\cite{Garay2015}.

With the blossom of various cryptocurrencies, permissionless blockchains are considered to be especially appropriate for constructing the decentralized autonomous resource management framework in (wireless) communication networks. Specifically, when the resource management relies on the design of incentive mechanisms (e.g., resource access control~\cite{maesa2017blockchain} and proactive edge caching~\cite{Wang1805:Decentralized}), permissionless blockchains are able to provide fast implementation of the self-organized trading platform with small investment in the operational infrastructure. Furthermore, with the PoW-based Nakamoto consensus protocol, the users of a Decentralized Application (DApp) are incentivized to turn themselves from the free riders of the blockchain network into consensus nodes (i.e., block miners) for more profit. However, due to the required computation contribution by the PoW, the computationally lightweight nodes such as the Internet of Things (IoT) devices may be prevented from directly participating in the consensus process. To alleviate such limitation, ``cloud mining'' becomes a viable option where the mobile devices offload their storage load and/or computation tasks in PoW to the Cloud/Fog Providers (CFPs) or even other edge devices~\cite{chen2017framework, huang2017vehicular}. In the case of computation offloading, the lightweight devices may employ the existing cloud-mining protocols such as Stratum~\cite{recabarren2017hardening} without causing any significant transmission overhead. From the perspective of the blockchain-based DApp's designer, the benefit of encouraging cloud-based mining is multifold. First, by incorporating more consensus nodes, the robustness of the blockchain network is naturally improved~\cite{Garay2015}. Second, the user devices may improve their valuation of the DApps, thanks to the additional reward obtained in the consensus process. Also, the high level of user activities may attract more users and in return further improve the robustness of the underlying blockchain network.

In this paper, we study the interaction between the computationally lightweight devices and a CFP, where the lightweight devices (i.e., block miners) purchase the computing power from the CFP to participate in the consensus process of a PoW-based blockchain for block-mining revenues. Game theory can be leveraged as a promising mathematical tool to analyze the interactions among the CFP and block miners. For example, in~\cite{zhang2017hierarchical}, the authors formulated a Stackelberg game to solve the resource management in fog computing networks, where the game theoretic study of the market and pricing strategies are presented. In~\cite{zhang2017multi}, the authors studied the spectrum resource allocation in order to mitigate the interference management among multiple cellular operators in the unlicensed system. A multi-leader multi-follower Stackelberg game is proposed to model the interactions among the operators and users in unlicensed spectrum. Similarly, we also model the resource offloading market as a two-stage Stackelberg game. In the first stage, the CFP sets the unit price for computation offloading. In the second stage, the miners decide on the amount of services to purchase from the CFP. In particular, we analyze two pricing schemes~\cite{jiang2014optimal}, i.e., uniform pricing where a uniform unit price is applied to all the miners and discriminatory pricing where different unit prices are assigned to different miners. The uniform pricing leads to a straightforward implementation as the CFP does not need to keep track of information of every miner, and charging the same prices is fair to all miners. However, from the perspective of the CFP, discriminatory pricing yields a higher profit by allowing price adjustment for different miners~\cite{laffont1998network}. The main contributions of this paper are summarized as follows.

\begin{enumerate}
\item We explore the possibility of implementing a permissionless, PoW-based blockchain in a network of computationally lightweight devices. By allowing computation offloading to the cloud/fog, we model the interactions between the rational blockchain miners and the CFP as a two-stage Stackelberg game.

\item We study both the uniform pricing scheme and the discriminatory pricing scheme for the CFP. Through backward induction, we provide a series of analytically results with respect to the properties of the Stackelberg equilibrium in different situations.

\item In particular, the existence and uniqueness of Stackelberg equilibrium are validated by identifying the best response strategies of the miners under the uniform pricing scheme. Likewise, the Stackelberg equilibrium is proved to exist and be unique by capitalizing on the Variational Inequalities (VI) theory under discriminatory pricing scheme.

\item We conduct extensive numerical simulations to evaluate the performance of the proposed price-based resource management in blockchain networks. The results show that the discriminatory pricing helps the CFP to encourage more service demand from the miners and achieve greater profit. Moreover, under uniform pricing, the CFP has an incentive to set the maximum price for the profit maximization.
\end{enumerate}

The rest of the paper is organized as follows. Section~\ref{Sec:RelatedWork} presents a brief review of the related work. We describe the model of the consensus formation in a permissionless PoW-based blockchain network and formulate the two-stage Stackelberg game between the lightweight nodes and the CFP in Section~\ref{Sec:Model}. In Section~\ref{Sec:Solution}, we analyze the optimal service demand of block miners as well as the profit maximization of the CFP using backward induction for both uniform and discriminatory pricing schemes. We present the performance evaluations in Section~\ref{Sec:Simulation}. Section~\ref{Sec:Conclusion} concludes the paper with summary and future directions.

\section{Related Work}\label{Sec:RelatedWork}
\subsection{Public Blockchains, DApps and Incentive Mechanism}
For blockchain networks, the core technological ``building blocks'' have been recognized as the distributed database (i.e., ledger), the consensus protocol and the executable scripts (i.e., smart contract) based on network consensus~\cite{dinh2017untangling}. From a data processing point of view, a DApp is essentially a collection of smart contracts and transactional data residing on the blockchain. The realization of a DApp relies on the distributed ledger to identify the state/ownership changes of the tokenized assets. The smart contracts are implemented as transaction (data)-driven procedures to autonomously determine the state transition regarding the asset re-distribution among the DApp users~\cite{dinh2017untangling}. With public blockchains, the implementation of a DApp does not require a centralized infrastructure, namely, dedicated storage and computation provision for the ledger and smart contracts. Instead, the DApp users are allowed to freely enable their functionalities among transaction issuing/validation, information propagation/storage and consensus participation~\cite{tschorsch2016bitcoin, dinh2017untangling}. More specifically, the token-based incentive mechanisms in public blockchains offload the tasks of resource provision and system maintenance from the DApp providers to the DApp users. Thereby, public blockhain networks are considered to be a suitable platform for implementing the incentive-driven Distributed Autonomous Organization (DAO) systems.

In recent years, a line of work has been dedicated to the study in DAO for wireless networking applications based on public blockchains. In~\cite{8024034}, a trading platform for Device-to-Device (D2D) computation offloading is proposed using a dedicated cryptocurrency network. Therein, resource offloading is executed between neighbor D2D nodes through smart contract-based auctions, and the block mining tasks are offloaded to the cloudlets. In~\cite{7966965}, a PoW-based public blockchain is adopted as the backbone of a P2P file storage market, where the privacy of different parties in a transaction is enhanced by the techniques such as ring signatures and one-time payment
addresses. When identity verification is required for market access granting, e.g., in the senarios of autonomous network slice brokering~\cite{8260929} and P2P electricity trading~\cite{kang2017enabling}, the public blockchains can be adapted into consortium blockchains by introducing membership authorizing servers with little modification to the consensus protocols and smart contract design.

Our paper also relates to the classical literature on incentive mechanisms in crowdsensing~\cite{yang2012crowdsourcing, chakeri2017incentive, chakeri2017iterative}. In crowdsensing, the crowdsensing platform as the service provider offers a reward as the incentive to attract more crowdsensing user participation. In the pioneering work~\cite{yang2012crowdsourcing}, the authors considered two system models: the platform-centric model where the provider offers a certain amount of reward that will be shared by the participating users, and the user-centric model where the users have their reserve prices for the participation. In~\cite{chakeri2017incentive}, the authors designed the incentive mechanisms for crowdsensing with multiple crowdsourcers, i.e., service providers. The interactions among the service providers are modelled as the noncooperative game. Therein, the authors proposed a discrete time dynamic algorithm utilizing the best response dynamics to compute the Nash equilibrium of the modeled game. The authors in~\cite{chakeri2017iterative} presented the incentive mechanism in a sealed market where the users have incomplete information on other users' behavior. The convergence to the Nash equilibrium in such a market is then analyzed using the well-known best response dynamics.

\subsection{Consensus and Game Theoretic Mining Models in PoW-based Blockchains}
By the Nakamoto protocol, from a single miner's point of view, the process of solving a PoW puzzle involves an exhaustive query to a collision-resistant hash function (e.g., SHA-256), which aims to find a fixed-length hashcode output with no less than a given number of prefix zeros~\cite{Bitcoin,Garay2015}. For each individual miner, such a process simulates a Poisson process when the required number of prefix zeros is sufficiently large. For a group of miners independently running their own PoW processes at the same time, the first miner to obtain the PoW puzzle solution will have a high probability of getting its block head proposal acknowledged by the entire network. Therefore, block mining under the Nakamoto protocol can also be viewed as a hashing competition, where the probability of a miner winning the competition is roughly proportional to the ratio between its devoted hash power\footnote{We use the hash power and computing power interchangeably throughout the paper.} and the total hash power in the network.

According to the theoretical analysis in~\cite{Garay2015}, when the PoW-based blockchain network satisfies the condition of honest majority in terms of computing power, the probability for the blockchain state machine to be compromised is negligible. Therefore, the mainstream research on the PoW-based consensus protocols focus on the protocol's incentive compatibility and thus the search of miners' rational strategy to optimize the reward obtained in the mining process. A plethora of recent studies~\cite{houy2014bitcoin,beccuti2017bitcoin,kiayias2016blockchain} model the mining process in PoW-based blockchain networks as a noncooperative game, where rational miners may withhold their newly found blocks with valid PoW solutions to internationally cause the fork of the blockchain. In certain conditions of hash power distribution, it is proved in~\cite{houy2014bitcoin,beccuti2017bitcoin,kiayias2016blockchain} that by postponing the newly mined blocks, rational miners may obtain a higher expected payoff than fully abiding by the Nakamoto protocol.

In the literature, the most relevant works to this paper are about the pool-based mining mechanisms. In public blockchains based on outsourceable PoW schemes, a mining pool is essentially a proxy node in the network that only enables its local functionalities of transaction issuing/validation and information propagation/storage. The proxy node offloads the queries to the hash function to the mining workers that subscribe to the pool for mining payment~\cite{dinh2017untangling,tschorsch2016bitcoin}. It is worth noting that most of the existing studies consider the pool-based mining from the perspective of mining workers (i.e., cloud-side resource providers)~\cite{lewenberg2015bitcoin,fisch2017socially, kim2016group, luu2015power}.  In~\cite{lewenberg2015bitcoin}, the process of mining pool formation is modeled as a coalitional game among the mining workers, which is found to have an empty core under the proportional payment scheme. In contrast, the social welfare of miners is considered in~\cite{fisch2017socially} and a geometric-payment pooling strategy is found to be able to achieve the optimal steady-state utility for the miners. In~\cite{kim2016group}, the group bargaining solution is adopted by considering the P2P relationship of the miners. In~\cite{luu2015power}, instead of limiting the miner subscription to a single mining pool, a computing power-splitting game is proposed. With the proposed scheme, the miners play a puzzle-solution game by distributing their computing power into different pools in order to maximize the mining reward.

\section{System Model and Game Formulation}\label{Sec:Model}

In this section, we first propose the system model of blockchain under our consideration~\cite{xiong2018mobile}. Then, we present the Stackelberg game formulation for the price-based computing resource management in blockchain networks assisted by cloud/fog computing.

\begin{figure}[t]
\centering
\includegraphics[width=.4\textwidth]{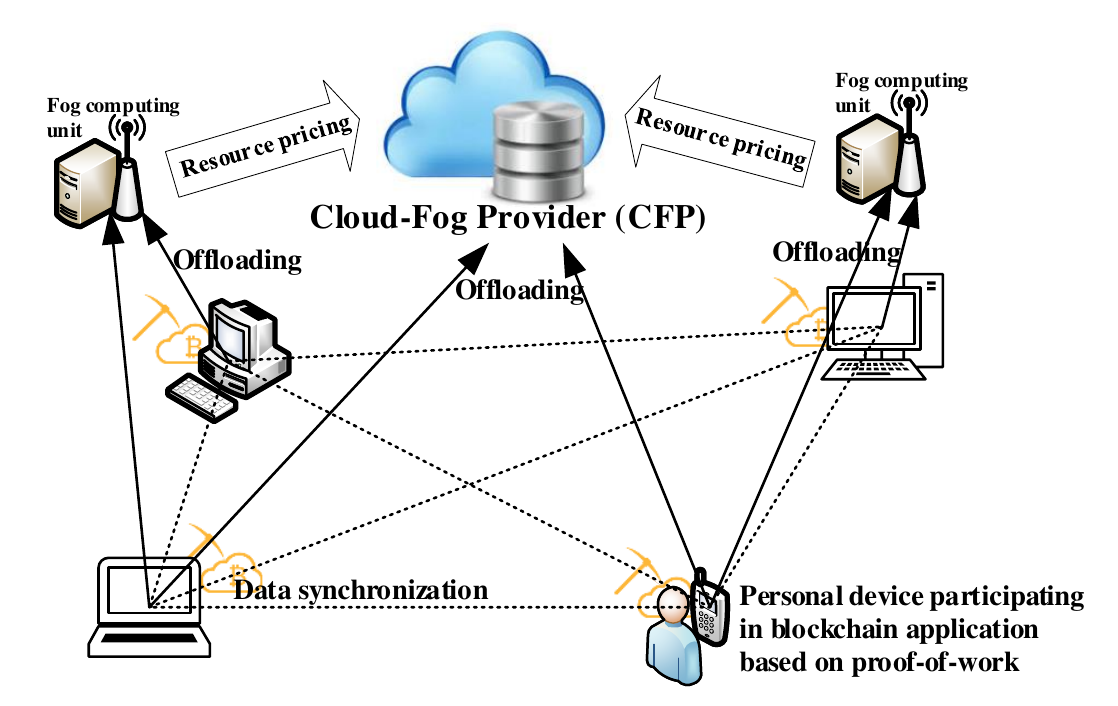}
\caption{\footnotesize{System model of public blockchain application involving PoW.}}\label{Fig:Model}
\end{figure}

\subsection{Chain Mining Assisted by Cloud/Fog Computing}\label{Subsec:ChainMining}
We consider a public blockchain network using the PoW-based consensus protocol~\cite{xiong2017optimal, luong2017blockchain, jiao2017blockchain}. The blockchain network dedicatedly works as the backbone of a specific DApp, where most of the nodes are limited in their local computing power (e.g., the IoT devices and smart phones in a typical crowd-sensing market). We assume that the adopted PoW protocol is ASIC-resistant~\cite{tschorsch2016bitcoin}, e.g., using the Ethash-based PoW scheme~\cite{wood2014ethereum} or the schemes alike. Then, to participate in the consensus process, a node only has to solve the PoW puzzle with general-purpose computing devices. In the blockchain network, a set of $N$ nodes denoted as ${\mathcal{N}} = \{1, \ldots, N\}$, are interested in participating in the consensus process and make extra profit through block mining. In order to achieve this, these block miners purchase the necessary hash power from a public CFP (e.g., Amazon EC2) without hassle of managing the infrastructure such as seeking extra electricity sources~\cite{tosh2017security}. In addition, we consider that the CFP is able to provide the near-to-end computing units such as fog nodes or even edge devices which are closer to the miners\footnote{Note that this fog unit deployment is also more appropriate in hostile environment where the communications with remote cloud are limited and for the access from personal devices which keep moving, e.g., mobile devices.}~\cite{NanTSC}. As such, the aforementioned PoW puzzle can be offloaded to the remote cloud or the nearby fog computing unit. The computing resources offered to the miners is priced by the CFP\footnote{Note that the resource may also include communication resource. Specifically, we can consider that the communication cost is part of the price charged by the CFP. In other words, the CFP offers the service as a bundle which is composed of computing and wireless/wired communication resources. The energy consumption for the computing and communication is naturally accounted in the bundle.}. Figure~\ref{Fig:Model} shows the system model of the blockchain network under our consideration. Note that we assume that the link between the miners and cloud/fog computing units is sufficiently reliable and secured, which is guaranteed by certain ready-to-use communication protocols (e.g., Stratum~\cite{recabarren2017hardening}).

The CFP, i.e., the seller, sells the computing services, and the miners, i.e., the buyers, access and consume this service from the remote cloud or the nearby fog computing unit. Each miner~$i \in {\cal N}$ determines their individual service demand, denoted by $x_i$. Additionally, we consider $x_i \in [\underline x ,\overline x]$, in which $\underline x$ is the minimum service demand, e.g., for blockchain data synchronization, and $\overline x$ is the maximum service demand governed by the CFP. Note that each miner has no incentive to unboundedly increase its service demand due to its financial burden. Then, let $\mathbf{x} \buildrel \Delta \over = ({x_1}, \ldots ,{x_N})$ and ${\mathbf{x}}_{-i}$ represent the service demand profile of all the miners and all other miners except miner~$i$, respectively. As such, the miner $i \in \cal N$ with the service demand $x_i$ has a relative computing power (hash power) $\alpha_i$ with respect to the total hash power of the network, which is defined as follows:
\begin{equation}\label{Eq:HashingPower}
{\alpha_i}(x_i, {\bf x}_{-i}) = \frac{{{x_i}}}{{\sum\nolimits_{j\in \cal N} {{x_j}} }}, \alpha_i >0,
\end{equation}
such that ${\sum\nolimits_{j\in \cal N} {{\alpha_j}} }=1$.

In the blockchain network, miners compete against each other in order to be the first one to solve the PoW puzzle and receive the reward from the speed game accordingly. The occurrence of solving the puzzle can be modeled as a random variable following a Poisson process with mean $\lambda = \frac{1}{{600\sec}}$~\cite{houy2014bitcoin}. Note that our model is general that can be applied with other values of $\lambda$ easily.
The set of transactions to be included in a block chosen by miner $i$ is denoted as $t_i$. Once the miner successfully solves the puzzle, the miner needs to propagate its solution to the whole blockchain network and its solution needs to reach consensus. Because there is no centralized authority to verify the validate a newly mined block, a mechanism for reaching network consensus must be employed. In this mechanism, the verification needs to be processed by other miners before the new mined block is appended to the current blockchain.

The first miner to successfully mine a block that reaches consensus earns the reward. The reward consists of a fixed reward denoted by $R$, and a variable reward which is defined as $r t_i$, where $r$ denotes a given variable reward factor and $t_i$ denotes the number of transactions included in the block mined by miner $i$~\cite{houy2014bitcoin}. Additionally, the process of solving the puzzle incurs an associated cost, i.e., the payment from miner $i$ to the CFP, $p_i$. The objective of the miners is to maximize their individual expected utility, and for miner $i$, it is defined as follows:
\begin{equation}
u_i = (R + r{t_i} )P_i\left(\alpha_i({x_i},{{\bf{x}}_{ - i}}), t_i\right) - {p_i}{x_i},
\end{equation}
where $P\left(\alpha_i ({x_i},{{\bf{x}}_{ - i}}), t_i\right)$ is the probability that miner $i$ successfully mines the block and its solutions reach consensus, i.e., miner $i$ wins the mining reward.

The process of successfully mining a block consists of two steps, i.e., the mining step and the propagation step. In the mining step, the probability that miner $i$ mines the block is directly proportional to its relative computing power $\alpha_i$. Furthermore, there are diminishing chances of wining if one miner chooses to propagate a block that propagates slowly to other miners in the propagation step. In other words, even though one miner may find the first valid block, if its mined block is large, then this block will be likely to be discarded because of long latency, which is called orphaning~\cite{houy2014bitcoin}. Considering this fact, the probability of successful mining by miner $i$ is discounted by the chances that the block is orphaned, ${\mathbb P}_{\mathrm{orphan}}(t_i)$, which is expressed by
\begin{equation}
P_i(\alpha_i ( x_i, {\mathbf{x}}_{-i} ), t_i ) = {\alpha_i}(1 - {{\mathbb P}_{\mathrm{orphan}}}(t_i)).
\end{equation}
Using the fact that block mining times follow the Poisson distribution aforementioned, the orphaning probability is approximated as~\cite{Approximation}:
\begin{equation}
{\mathbb P}_{\mathrm{orphan}}(t_i) = 1 - {e^{ - \lambda \tau(t_i) }},
\end{equation}
where $\tau(t_i)$ is the block propagation time, which is a function of the block size. In other words, the propagation time needed for a block to reach consensus is dependent on its size $t_i$, i.e., the number of transactions in it~\cite{houy2014bitcoin, senmarti2015analysis}. Thus, the bigger the block is, the more time needed to propagate the block to the whole blockchain network~\cite{decker2013information}. Same as~\cite{houy2014bitcoin}, we assume this time function is linear, i.e., $\tau(t_i) = z \times t_i$ with $z>0$ represents a given delay factor. Note that this linear approximation is acceptable according to the numerical results from~\cite{houy2014bitcoin}. Additionally, it would be more appropriate to add a constant term in this function~\cite{decker2013information}, but apparently this constant term has no effect on our subsequent analytical results. Thus, the probability that the miner $i$ successfully mines a block and its solution reaches consensus is expressed as follows:
\begin{equation}
{P_i}(\alpha_i ({x_i},{{\bf{x}}_{ - i}}),t_i) = {\alpha_i}{e^{ -\lambda z{t_i}}},
\end{equation}
where ${\alpha_i}(x_i, {\bf x}_{-i})$ is given in~(\ref{Eq:HashingPower}).

\subsection{Two-Stage Stackelberg Game Formulation}\label{Subsec:GameModel}
\begin{figure}[t]
\centering
\includegraphics[width=0.4\textwidth]{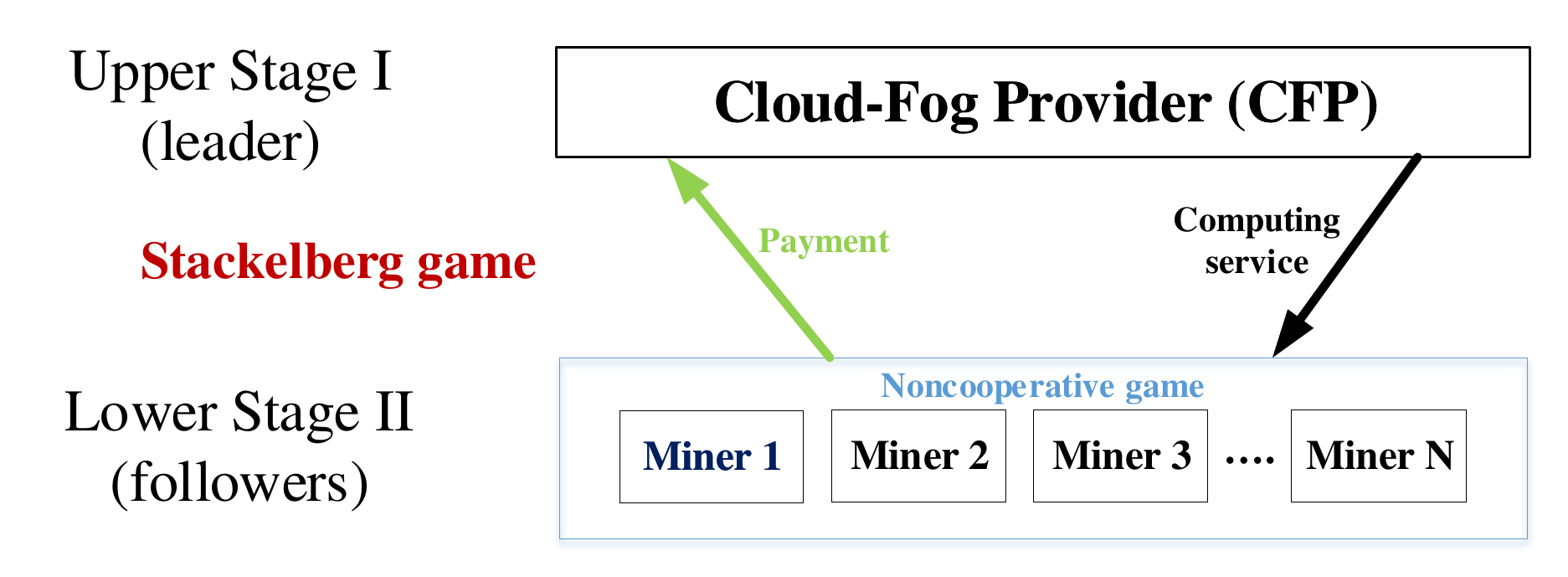}
\caption{\footnotesize{Two-stage Stackelberg game model of the interactions among the CFP and miners in the blockchain network.}}\label{Fig:GameModel}
\end{figure}
The interaction between the CFP and miners can be modeled as a two-stage Stackelberg game, as illustrated in Fig.~\ref{Fig:GameModel}. The CFP, i.e., the leader, sets the price in the upper Stage I. The miners, i.e., the followers, decide on their optimal computing service demand for offloading in the lower Stage II, being aware of the price set by the CFP. By using backward induction, we formulate the optimization problems for the leader and followers as follows.
\subsubsection{Miners' mining strategies in Stage II}
Given the pricing of the CFP and other miners' strategies, the miner $i$ determines its computing service demand for its hash power maximizing the expected utility which is given as:
\begin{equation}\label{Eq:Utility}
u_i({x_i},{{\bf{x}}_{ - i}},{p_i}) = (R + r{t_i})\frac{{{x_i}}}{{\sum\nolimits_{j\in \cal N} {{x_j}} }}{e^{ - \lambda z{t_i}}} - {p_i}{x_i},
\end{equation}
where $p_i$ is the price per unit for service demand of miner $i$. The miner sub-game problem can be written as follows:

{\bf{Problem 1. (Miner $i$ sub-game):}}
\begin{equation}
\begin{aligned}
& \underset{x_i}{\text{maximize}}
& & u_i({x_i},{{\bf{x}}_{ - i}},{p_i}) \\
& \text{subject to}
& & x_i \in [\underline x ,\overline x].\\
\end{aligned}
\end{equation}
\subsubsection{CFP's pricing strategies in Stage I}
The profit of the CFP is the revenue obtained from charging the miners for computing service minus the service cost. The service cost is directly related to the time that the miner takes to mine a block, the cost of electricity, $c$, and the other cost that is a function of the service demand $x_i$. Therefore, the CFP decides the pricing within the strategy space $\{{\bf p}= [p_i]_{i\in \cal N}:0\le p_i \le \overline p\}$ to maximize its profit which is represented as:
\begin{equation}\label{Eq:Profit}
\Pi ( {\mathbf{p}}, {\mathbf{x}} ) = \sum\nolimits_{i \in \cal N}{p_i}{x_i} - \sum\nolimits_{i \in \cal N}cT{x_i}.
\end{equation}
Note that practically the price is bounded by maximum price constraint that is denoted by
$\overline p$. Then, the profit maximization problem of the CFP is formulated as follows.

{\bf{Problem 2. (CFP sub-game):}}
\begin{equation}
\begin{aligned}
& \underset{\bf p}{\text{maximize}}
& & \Pi(\bf p, x) \\
& \text{subject to}
& & 0\le p_i \le \overline p.\\
\end{aligned}
\end{equation}

Problem 1 and Problem 2 together form the Stackelberg game, and the objective of this game is to find the Stackelberg equilibrium. The Stackelberg equilibrium ensures that the profit of the CFP is maximized given that the miners generate their demands following the best responses, i.e., the Nash equilibrium. This means that the demands from the miners maximize the utility. In our problem, the Stackelberg equilibrium can be written as follows.
\begin{definition}
Let $\bf x^*$ and $\bf p^*$ denote the optimal service demand vector of all the miners and optimal unit price vector of computing service, respectively. Then, the point $(\bf x^*, p^*)$ is the Stackelberg equilibrium if the following conditions,
\begin{equation}
\Pi ({\bf p^*},{\bf x^*}) \ge \Pi ({\bf p},{\bf x^*})
\end{equation}
and
\begin{equation}
u_i ( x^*_i, {\mathbf{x}}^*_{-i}, {\mathbf{p}}^* ) \ge {u_i}({x_i},{\bf x}_{ - i}^*,{\bf p^*}),\forall {x_i} \ge 0, \forall i
\end{equation}
are satisfied, where ${\bf x}_{-i}^*$ is the best response service demand vector for all the miners except miner $i$.
\end{definition}


Note that the same or different prices can be applied to the miners, which we refer to them as the uniform and discriminatory pricing schemes, respectively. In the following, we investigate these two pricing schemes for resource management in blockchain networks. The Stackelberg equilibrium ensures that the profit of the CFP is maximized given that the miners generate their demands following the best responses, i.e., the Nash equilibrium. This means that the demands from the miners maximize the utility. The Stackelberg equilibrium under the uniform pricing scheme contains only one single price that the CFP imposes to the miners identically. On the contrary, the equilibrium under the discriminatory pricing scheme contains different prices, each of which the CFP imposes to each miner separately.

The significance of each pricing scheme is as follows. Under the uniform pricing scheme, the equilibrium ensures a fair price applied to all miners. The miners are indifferent to choose the services. However, the CFP has limited degree of freedom to maximize its profit. By contrast, under the discriminatory pricing scheme, the CFP can customize the price for each miner, matching with the miner's demand and preference. As such, the profit obtained under the discriminatory pricing scheme is expected to be superior to that of the uniform pricing scheme in terms of the higher profit for the CFP.


\section{Equilibrium Analysis for Cloud/Fog Computing Resource Management}\label{Sec:Solution}
In this section, we propose the uniform pricing and discriminatory pricing schemes for resource management in blockchain application involving PoW assisted by the CFP. We then analyze the optimal service demand of miners as well as the profit maximization of the CFP under both pricing schemes.
\subsection{Uniform Pricing Scheme}\label{SubSec:Uniform}
We first consider the uniform pricing scheme, in which the CFP charges all the miners the same unit price for their computing service demand, i.e., $p_i = p, \forall i$. Given the payoff functions defined in Section~\ref{Sec:Model}, we use backward induction to analyze the Stackelberg game.
\subsubsection{Stage II: Miners' Demand Game}\label{SubsubSec:UtilityMaximization_Uniform}
Given the price $p$ decided by the CFP, in Stage II, the miners compete with each other to maximize their own utility by choosing their individual service demand, which forms the noncooperative Miners' Demand Game (MDG)~$\mathcal{G}^u= \{\mathcal{N},\{x_i\}_{i \in \mathcal{N}},\{u_i\}_{i \in \mathcal{N}}\}$, where $\cal N$ is the set of miners, $\{x_i\}_{i \in \mathcal{N}}$ is the strategy set, and $u_i$ is the utility, i.e., payoff, function of miner $i$. Specifically, each miner $i \in \cal N$ selects its strategy to maximize its utility function $u_i(x_i, {\bf{x}}_{-i}, p)$. We next analyze the existence and uniqueness of the Nash equilibrium in the MDG.

\begin{definition}
A demand vector ${\bf{x}}^* = (x^*_1, \ldots , x^*_N)$ is the Nash equilibrium of the MDG~$\mathcal{G}^u= \{\mathcal{N},\{x_i\}_{i \in \mathcal{N}},\{u_i\}_{i \in \mathcal{N}}\}$, if, for every miner $i\in \cal N$, $u_i(x^*_i, {\mathbf{x}}^*_{-i}, p) \ge u_i({x_i}', {\mathbf{x}}^*_{-i}, p)$ for all ${x_i}' \in [\underline x ,\overline x]$, where $u_i(x_i, {\bf x}_{-i})$ is the resulting utility of the miner $i$, given the other miners' demand ${\bf x}_{-i}$.
\end{definition}

\begin{figure*}[ht]\footnotesize
\begin{eqnarray}\label{Eq:BestResponse}
{x_i}^* = \mathscr F_i ({{\bf x}}) = \begin{cases}
\underline x, & \sqrt {\frac{{(R + r{t_i})\sum\limits_{i \ne j} {{x_j}} }}{{p{e^{ \lambda z{t_i}}}}}} - \sum\limits_{i \ne j} {{x_j}}< \underline x \cr \sqrt {\frac{{(R + r{t_i})\sum\limits_{i \ne j} {{x_j}} }}{{p{e^{  \lambda z{t_i}}}}}} - \sum\limits_{i \ne j} {{x_j}}, &\underline x \le \sqrt {\frac{{(R + r{t_i})\sum\limits_{i \ne j} {{x_j}} }}{{p{e^{ \lambda z{t_i}}}}}} - \sum\limits_{i \ne j} {{x_j}} \le \overline x \cr \overline x, &\sqrt {\frac{{(R + r{t_i})\sum\limits_{i \ne j} {{x_j}} }}{{p{e^{ \lambda z{t_i}}}}}} - \sum\limits_{i \ne j} {{x_j}}> \overline x\end{cases}.
\end{eqnarray}
\hrulefill
\end{figure*}

\begin{theorem}
A Nash equilibrium exists in MDG~$\mathcal{G}^u= \{\mathcal{N},\{x_i\}_{i \in \mathcal{N}},\{u_i\}_{i \in \mathcal{N}}\}$.
\end{theorem}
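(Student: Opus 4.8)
The plan is to invoke the classical existence theorem for concave games (the Debreu--Glicksberg--Fan theorem, specialized by Rosen for concave payoffs), which delivers a pure-strategy Nash equilibrium once three hypotheses are checked: (i) each player's strategy set is a nonempty, compact, convex subset of a Euclidean space; (ii) each payoff $u_i$ is jointly continuous in the full profile $\mathbf{x}$; and (iii) each $u_i$ is (quasi-)concave in the player's own variable $x_i$ for every fixed $\mathbf{x}_{-i}$. In the MDG~$\mathcal{G}^u$ the action set of miner~$i$ is the closed interval $[\underline x, \overline x]$, so the product strategy space $\prod_{i\in\mathcal N}[\underline x,\overline x]$ is immediately nonempty, compact, and convex, dispatching (i).

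For continuity, I would abbreviate $S_{-i} = \sum_{j \ne i} x_j$ and rewrite the payoff from~(\ref{Eq:Utility}) as $u_i = (R + r t_i)\,e^{-\lambda z t_i}\,\tfrac{x_i}{x_i + S_{-i}} - p\,x_i$. Since the lower bound $\underline x > 0$ (the minimum demand for data synchronization) forces the aggregate $x_i + S_{-i} \ge N\underline x > 0$, the ratio has a denominator bounded away from zero, so $u_i$ is a composition and quotient of continuous functions and is continuous on the whole strategy space, giving (ii).

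The core step is concavity in $x_i$. Writing the constant $K_i = (R + r t_i)\,e^{-\lambda z t_i} > 0$, I would compute the first and second own-partial derivatives,
\[
\frac{\partial u_i}{\partial x_i} = K_i \frac{S_{-i}}{(x_i + S_{-i})^2} - p, \qquad
\frac{\partial^2 u_i}{\partial x_i^2} = -\frac{2 K_i S_{-i}}{(x_i + S_{-i})^3}.
\]
Because $K_i > 0$, $S_{-i} \ge (N-1)\underline x > 0$, and $x_i + S_{-i} > 0$, the second derivative is strictly negative, so $u_i$ is strictly concave in $x_i$ on $[\underline x, \overline x]$, establishing (iii). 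With all three hypotheses verified, the concave-game existence theorem yields a Nash equilibrium of $\mathcal{G}^u$.

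I do not anticipate a genuine obstacle; the one point demanding care is keeping the aggregate demand bounded away from zero, which is precisely what the strictly positive lower bound $\underline x > 0$ on each action set guarantees---without it, both the continuity argument and the sign of the second derivative could degenerate as $\sum_j x_j \to 0$. As a useful byproduct, setting $\partial u_i/\partial x_i = 0$ and projecting onto the box constraint $[\underline x,\overline x]$ recovers the piecewise best-response map $\mathscr F_i$ already displayed in~(\ref{Eq:BestResponse}), which sets up the uniqueness argument to follow.
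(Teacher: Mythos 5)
Your proposal is correct and follows essentially the same route as the paper: verify that each strategy set $[\underline x, \overline x]$ is nonempty, compact, and convex, that $u_i$ is continuous, and that $u_i$ is strictly concave in $x_i$ via the sign of the second own-partial derivative, then invoke the standard existence theorem for concave games. Your treatment is slightly more careful than the paper's in spelling out joint continuity and the role of $\underline x > 0$ in keeping the aggregate demand away from zero, but the underlying argument is the same.
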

\begin{proof}
Firstly, the strategy space for each miner is defined to be $[\underline x ,\overline x]$, which is a non-empty, convex, compact subset of the Euclidean space. From~(\ref{Eq:Utility}), $u_i$ is apparently continuous in $[\underline x ,\overline x]$. Then, we take the first order and second order derivatives of~(\ref{Eq:Utility}) with respect to $x_i$ to prove its concavity, which can be written as follows:
\begin{equation}\label{Eq:FirstOrderUtility}
\frac{{\partial {u_i}}}{{\partial {x_i}}} = (R + r{t_i}){e^{ - \lambda z{t_i}}}\frac{{\partial {\alpha_i}}}{{\partial {x_i}}} - p,
\end{equation}
\begin{equation}
\frac{{{\partial ^2}{u_i}}}{{\partial {x_i}^2}} = (R + r{t_i}){e^{ - \lambda z{t_i}}}\frac{{{\partial ^2}{\alpha_i}}}{{\partial {x_i}^2}} < 0,
\end{equation}
where $\frac{{\partial {\alpha_i}}}{{\partial {x_i}}} = \frac{{\sum\nolimits_{i \ne j} {{x_j}} }}{{{{\left( {\sum\nolimits_{i\in \cal N} {{x_j}} } \right)}^2}}} > 0$, and $\frac{{{\partial ^2}{\alpha_i}}}{{\partial {x_i}^2}} = - 2\frac{{\sum\nolimits_{i \ne j} {{x_j}} }}{{{{\left( {\sum\nolimits_{i \in \cal N} {{x_j}} } \right)}^3}}} < 0$.

Therefore, we have proved that $u_i$ is strictly concave with respect to $x_i$. Accordingly, the Nash equilibrium exists in this noncooperative MDG $\mathcal{G}^u$~\cite{han2012game}. The proof is now completed.
\end{proof}
Further, based on the first order derivative condition, we have
\begin{equation}\label{Eq:FirstOrderDerivative}
\frac{{\partial {u_i}}}{{\partial {x_i}}} = (R + r{t_i}){e^{ - \lambda z{t_i}}}\frac{{\partial {\alpha_i}}}{{\partial {x_i}}} - p = 0,
\end{equation}
and we obtain the best response function of miner $i$ by solving~(\ref{Eq:FirstOrderDerivative}), as shown in~(\ref{Eq:BestResponse}).

\begin{theorem}
The uniqueness of the Nash equilibrium in the noncooperative MDG is guaranteed given the following condition
\begin{equation}\label{Eq:Condition}
\frac{{2(N - 1){e^{ \lambda z{t_i}}}}}{{R + r{t_i}}} < \sum\limits_{j \in \cal N} {\frac{{{e^{ \lambda z{t_j}}}}}{{R + r{t_j}}}}
\end{equation}
is satisfied.
\end{theorem}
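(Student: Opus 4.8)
The plan is to build on the explicit best-response map $\mathscr{F}_i$ in~(\ref{Eq:BestResponse}) and to establish uniqueness through the standard-function framework: since existence is already guaranteed by Theorem~1, it suffices to show that the best-response mapping $\mathscr{F}=(\mathscr{F}_1,\dots,\mathscr{F}_N)$ is a \emph{standard function}, i.e., positive, monotone, and scalable, which forces its fixed point---and hence the Nash equilibrium of $\mathcal{G}^u$---to be unique. Throughout I would work on the interior branch of~(\ref{Eq:BestResponse}), writing $\mathscr{F}_i(\mathbf{x})=\sqrt{c_i\,S_{-i}}-S_{-i}$ with $S_{-i}=\sum_{j\neq i}x_j$ and $c_i=(R+rt_i)/(p\,e^{\lambda z t_i})>0$, and then argue separately that the two saturated branches at $\underline{x}$ and $\overline{x}$ do not create additional fixed points.

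First I would dispatch the two easy properties. Scalability follows from a one-line computation: for $\mu>1$, $\mu\mathscr{F}_i(\mathbf{x})-\mathscr{F}_i(\mu\mathbf{x})=(\mu-\sqrt{\mu})\sqrt{c_i S_{-i}}>0$, since scaling $\mathbf{x}$ by $\mu$ scales $S_{-i}$ by $\mu$ but the square-root term only by $\sqrt{\mu}$. Positivity reduces to $\sqrt{c_i S_{-i}}>S_{-i}$, i.e.\ $c_i>S_{-i}$. The decisive property is monotonicity: since $\partial\mathscr{F}_i/\partial x_k=\tfrac{1}{2}\sqrt{c_i/S_{-i}}-1$ for every $k\neq i$, the map $\mathscr{F}_i$ is increasing in the rivals' demands exactly when $c_i>4S_{-i}$, a requirement that also subsumes the positivity condition.

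The final step is to translate $c_i>4S_{-i}$ into the stated inequality~(\ref{Eq:Condition}). To locate the operating point I would aggregate the interior first-order conditions: summing $x_i=S-p a_i S^2$ (with $a_i=e^{\lambda z t_i}/(R+rt_i)$ and $S=\sum_j x_j$) over $i$ yields the single scalar equation $pS\sum_j a_j=N-1$, so the aggregate $S=(N-1)/(p\sum_j a_j)$---and therefore every $x_i=S-pa_iS^2$---is uniquely pinned down; this alone shows there is at most one interior equilibrium. Evaluating $c_i>4S_{-i}$ at this point, using $S_{-i}=pa_iS^2$ and $c_i=1/(pa_i)$, collapses to $2p a_i S<1$, which upon substituting $pS=(N-1)/\sum_j a_j$ is precisely $2(N-1)a_i<\sum_j a_j$, i.e.~(\ref{Eq:Condition}). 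Hence the condition guarantees both that the interior candidate is a genuine equilibrium and that $\mathscr{F}$ is standard on the relevant region, delivering uniqueness.

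I expect the main obstacle to be rigor at the boundary rather than the interior algebra: the standard-function argument nominally asks for positivity and monotonicity to hold throughout the feasible box $[\underline{x},\overline{x}]^N$, whereas the computation above certifies them at the operating aggregate, so the careful part is (i) confirming that~(\ref{Eq:Condition}) keeps the relevant best responses on the increasing interior branch of~(\ref{Eq:BestResponse}), and (ii) ruling out spurious fixed points on the saturated branches at $\underline{x}$ and $\overline{x}$, so that the uniqueness of the interior candidate indeed extends to the constrained game.
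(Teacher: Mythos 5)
Your proposal is correct and takes essentially the same route as the paper's own proof: it uses the standard-function framework (positivity, monotonicity, scalability of the best-response map $\mathscr F$) and converts the key requirement $c_i > 4S_{-i}$ into condition~(\ref{Eq:Condition}) by evaluating $S_{-i}$ at the equilibrium aggregate, which is precisely the content of the paper's Lemma~1. The differences are cosmetic---you verify monotonicity via $\partial \mathscr F_i/\partial x_k$ rather than the paper's factored difference, and you add the (valid) observation that the aggregate equation pins down at most one interior equilibrium---while the boundary/interior-branch caveat you honestly flag is equally unaddressed in the paper's own proof.
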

\begin{proof}
Let $\bf x^*$ denote the Nash equilibrium of the MDG. By definition, the Nash equilibrium needs to satisfy ${\bf x} = \mathscr F(\bf x)$, in which $\mathscr F({\bf x}) = (\mathscr F_1({\bf x}), \mathscr F_2({\bf x}), \ldots, \mathscr F_N({\bf x}))$. In particular, $\mathscr F_i({\bf x})$ is the best response function of miner $i$, given the demand strategies of other miners. The uniqueness of the Nash equilibrium can be proved by showing that the best response function of miner $i$, i.e., as given in~(\ref{Eq:BestResponse}), is the standard function~\cite{han2012game}.
\begin{definition}
A function $\mathscr F(\bf x)$ is a standard function when the following properties are guaranteed~\cite{han2012game}:\\
$(1)$ Positivity: $\mathscr F(\bf x)>0$;\\
$(2)$ Monotonicity: If $\bf x \le x'$, then $\mathscr F(\bf x) \le \mathscr F(\bf x')$;\\
$(3)$ Scalability: For all $\phi >1$, $\phi \mathscr F(\bf x) > \mathscr F(\phi \bf x)$.\\
\end{definition}
Firstly, for the positivity, under the condition in~(\ref{Eq:Condition}), we have (from Lemma~1)
\begin{equation}
\sum\limits_{i \ne j} {{x_j}} < \frac{{R + r{t_i}}}{{4p{e^{ \lambda z{t_i}}}}} < \frac{{R + r{t_i}}}{{p{e^{ \lambda z{t_i}}}}},
\end{equation}
then we can conclude that
\begin{equation}
\sum\limits_{i \ne j} {{x_j}} < \sqrt {\frac{{(R + r{t_i})\sum\limits_{i \ne j} {{x_j}} }}{{p{e^{ \lambda z{t_i}}}}}}.
\end{equation}
Thus, we can prove that
\begin{equation}
{\mathscr F}_i({\bf x}) = \sqrt {\frac{{(R + r{t_i})\sum\limits_{i \ne j} {{x_j}} }}{{p{e^{ \lambda z{t_i}}}}}} - \sum\limits_{i \ne j} {{x_j}} > 0,
\end{equation}
which is the positivity condition.
\begin{figure*}[ht]\footnotesize
\begin{eqnarray}\label{Eq:Monotonicity}
{\cal F}_i({\bf{x}}') - {\cal F}_i({\bf{x}}) &=& \sqrt {\frac{{(R + r{t_i})\sum\limits_{i \ne j} {{x_j^\prime} } }}{{p{e^{ \lambda z{t_i}}}}}} - {\sum\limits_{i \ne j} {{x_j^\prime}} } - \sqrt {\frac{{(R + r{t_i})\sum\limits_{i \ne j} {{x_j}} }}{{p{e^{ \lambda z{t_i}}}}}} - \sum\limits_{i \ne j} {{x_j}} \nonumber \\
&=& \left( {\sqrt {\frac{{(R + r{t_i})}}{{p{e^{ \lambda z{t_i}}}}}} - \sqrt {{{\sum\limits_{i \ne j} {{x_j^\prime}} } }} - \sqrt {\sum\limits_{i \ne j} {{x_j}} } } \right) \left( {\sqrt {{{\sum\limits_{i \ne j} {{x_j^\prime}} } }} - \sqrt {\sum\limits_{i \ne j} {{x_j}} } } \right).
\end{eqnarray}
\begin{eqnarray}\label{Eq:Scalability}
\phi {\cal F}_i({\bf x}) - {\cal F}_i(\phi {\bf x}) &=& \phi \sqrt {\frac{{(R + r{t_i})\sum\limits_{i \ne j} {{x_j}} }}{{p{e^{  \lambda z{t_i}}}}}} -\phi \sum\limits_{i \ne j} {{x_j}} - \sqrt {\frac{{(R + r{t_i})\sum\limits_{i \ne j} {\phi {x_j}} }}{{p{e^{  \lambda z{t_i}}}}}} - \sum\limits_{i \ne j} {\phi {x_j}}\nonumber \\
 &=& \left( {\phi - \sqrt \phi } \right)\sqrt {\frac{{(R + r{t_i})\sum\limits_{i \ne j} {{x_j}} }}{{p{e^{  \lambda z{t_i}}}}}} > 0, \forall \phi>1.
\end{eqnarray}
\hrulefill
\end{figure*}
Secondly, we prove the monotonicity of~(\ref{Eq:BestResponse}). Let $\bf x' > x$, we can further simplify the expression of ${\cal F}_i({\bf{x}}') - {\cal F}_i({\bf{x}})$, which is shown in~(\ref{Eq:Monotonicity}). In particular, we have ${\sqrt {{{\sum\limits_{i \ne j} {{x_j^\prime}} } }} - \sqrt {\sum\limits_{i \ne j} {{x_j}} } }>0$, and we can easily verify that
\begin{multline}
{\sqrt {\frac{{R + r{t_i}}}{{p{e^{ \lambda z{t_i}}}}}} - \sqrt {{{\sum\limits_{i \ne j} {{x_j^\prime}} } }} - \sqrt {\sum\limits_{i \ne j} {{x_j}} } } \in \\ \left( {\sqrt {\frac{{R + r{t_i}}}{{p{e^{ \lambda z{t_i}}}}}} - 2\sqrt {{{\sum\limits_{i \ne j} {{x_j^\prime}} } }} ,\sqrt {\frac{{R + r{t_i}}}{{p{e^{ \lambda z{t_i}}}}}} - 2\sqrt {\sum\limits_{i \ne j} {{x_j}} } } \right).
\end{multline}
Under the condition in~(\ref{Eq:Condition1}), we can prove that
\begin{equation}
{\sqrt {\frac{{R + r{t_i}}}{{p{e^{ \lambda z{t_i}}}}}} - 2\sqrt {\sum\limits_{i \ne j} {{x_j}}}}>0, \forall x_j.
\end{equation}
Thus, the best response function of miner $i$ in~(\ref{Eq:BestResponse}) is always positive.

At last, as for scalability, we need to prove that $\phi \mathscr F(x) > \mathscr F(\phi x)$, for $\lambda >1$. The steps of proving the positivity of $\phi {\cal F}(x) - {\cal F}(\phi x)$ are shown in~(\ref{Eq:Scalability}). Therefore, $\phi \mathscr F(x) > \mathscr F(\phi x)$ is always satisfied for $\phi >1$. Until now, we have proved that the best response function in~(\ref{Eq:BestResponse}) satisfies three properties described in Definition~2. Therefore, the Nash equilibrium of MDG~$\mathcal{G}^u= \{\mathcal{N},\{x_i\}_{i \in \mathcal{N}},\{u_i\}_{i \in \mathcal{N}}\}$ is unique. The proof is now completed.
\end{proof}

\begin{theorem}
The unique Nash equilibrium for miner $i$ in the MDG is given by
\begin{equation}\label{Eq:MDGequilibruim}
{x_i}^* = \frac{{N - 1}}{{\sum\limits_{j\in \cal N} {\frac{{p{e^{ \lambda z{t_j}}}}}{{R + r{t_j}}}} }} - {\left( {\frac{{N - 1}}{{\sum\limits_{j\in \cal N} {\frac{{p{e^{ \lambda z{t_j}}}}}{{R + r{t_j}}}} }}} \right)^2}\frac{{p{e^{ \lambda z{t_i}}}}}{{R + r{t_i}}}, \forall i,
\end{equation}
provided that the condition in~(\ref{Eq:Condition}) holds.

\end{theorem}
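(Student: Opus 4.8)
The plan is to treat the equilibrium condition as a fixed-point relation and collapse it to a single scalar equation in the \emph{aggregate} demand. By Theorem~2, under condition~(\ref{Eq:Condition}) the Nash equilibrium is unique, and the positivity argument already established there guarantees that every miner's equilibrium demand falls in the interior regime of~(\ref{Eq:BestResponse}). Hence I may apply the middle branch of the best response~(\ref{Eq:BestResponse}) simultaneously to all miners, i.e. at the equilibrium each $x_i^*$ satisfies $x_i^* = \sqrt{(R+rt_i)\sum_{i\ne j}x_j^*/(p e^{\lambda z t_i})} - \sum_{i\ne j}x_j^*$.

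First I would introduce the aggregate demand $S \buildrel \Delta \over = \sum_{j\in\cal N} x_j^*$ and observe that $\sum_{i\ne j} x_j^* = S - x_i^*$, so that the interior best response rearranges to $x_i^* + \sum_{i\ne j}x_j^* = \sqrt{(R+rt_i)(S-x_i^*)/(p e^{\lambda z t_i})}$. The left-hand side is exactly $S$, so squaring both sides yields $S^2 = (R+rt_i)(S-x_i^*)/(p e^{\lambda z t_i})$, which I rearrange into the clean per-miner identity
\begin{equation}
x_i^* = S - S^2\,\frac{p e^{\lambda z t_i}}{R+rt_i}.
\end{equation}
This already exhibits the two-term structure of~(\ref{Eq:MDGequilibruim}); it remains only to pin down the value of $S$.

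Next I would sum this identity over all $i\in\cal N$. Since $\sum_{i\in\cal N} x_i^* = S$, summation gives $S = NS - S^2\sum_{i\in\cal N}\frac{p e^{\lambda z t_i}}{R+rt_i}$. Because $S>0$ in the interior regime, I can divide through by $S$ to obtain the single linear relation $S\sum_{i\in\cal N}\frac{p e^{\lambda z t_i}}{R+rt_i} = N-1$, hence $S = (N-1)\big/\sum_{j\in\cal N}\frac{p e^{\lambda z t_j}}{R+rt_j}$. Substituting this expression for $S$ back into the per-miner identity reproduces~(\ref{Eq:MDGequilibruim}) term by term.

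The reduction to a scalar equation is exact and the resulting equation for $S$ is linear, so I do not anticipate any genuine analytic obstacle. The only points requiring care are bookkeeping ones: justifying that the interior branch of~(\ref{Eq:BestResponse}) applies to every miner (which follows from the positivity and monotonicity established in the proof of Theorem~2 together with the boundary handling built into~(\ref{Eq:BestResponse})) and the legitimacy of dividing by $S$, which needs $S\neq 0$. Both are secured by condition~(\ref{Eq:Condition}), so the main effort is simply assembling these already-available facts in the right order.
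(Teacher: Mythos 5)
Your proposal is correct and follows essentially the same route as the paper: both start from the interior first-order condition, aggregate over all miners to pin down the total demand $S = (N-1)\big/\sum_{j\in\cal N}\frac{p e^{\lambda z t_j}}{R+rt_j}$, and then back-substitute to express each $x_i^*$ in terms of $S$. The only difference is bookkeeping order — you square the per-miner relation before summing, while the paper sums the first-order condition first and squares afterward — which yields identical algebra.
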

\begin{proof}
According to~(\ref{Eq:FirstOrderUtility}), for each miner $i$, we have the mathematical expression
\begin{equation}
\frac{{\sum\limits_{i \ne j} {{x_j}} }}{{{{\left( {\sum\limits_{j\in \cal N}
 {{x_j}} } \right)}^2}}} = \frac{{p{e^{ \lambda z{t_i}}}}}{{R + r{t_i}}}.
\end{equation}
Then, we calculate the summation of this expression for all the miners as follows:
\begin{equation}
\frac{{(N - 1)\sum\limits_{j \in \cal N} {{x_j}} }}{{{{\left( {\sum\limits_{j\in \cal N} {{x_j}} } \right)}^2}}} = \sum\limits_{i\in \cal N} {\frac{{p{e^{ \lambda z{t_i}}}}}{{R + r{t_i}}}},
\end{equation}
which means $\frac{{(N - 1)}}{{\sum\limits_{j\in \cal N} {{x_j}} }} = \sum\limits_{i \in \cal N} {\frac{{p{e^{ \lambda z{t_i}}}}}{{R + r{t_i}}}}$. Thus, we have
\begin{equation}\label{Eq:Summative}
\sum\limits_{j \in \cal N} {{x_j}} = \frac{{N - 1}}{{\sum\limits_{i \in \cal N} {\frac{{p{e^{ \lambda z{t_i}}}}}{{R + r{t_i}}}} }}.
\end{equation}
Recall from~(\ref{Eq:BestResponse}), according to the first order derivative condition, we have
\begin{equation}\label{Eq:FirstDerivative}
\sum\limits_{j \in \cal N} {{x_j}} = \sqrt {\frac{{(R + r{t_i})\sum\limits_{i \ne j} {{x_j}} }}{{p{e^{ \lambda z{t_i}}}}}}.
\end{equation}
By substituting~(\ref{Eq:FirstDerivative}) into~(\ref{Eq:Summative}), we have
\begin{equation}
\frac{{N - 1}}{{\sum\limits_{i \in \cal N} {\frac{{p{e^{ \lambda z{t_i}}}}}{{R + r{t_i}}}} }} = \sqrt {\frac{{R + r{t_i}}}{{p{e^{ \lambda z{t_i}}}}}\left( {\frac{{N - 1}}{{\sum\limits_{i\in \cal N} {\frac{{p{e^{ \lambda z{t_i}}}}}{{R + r{t_i}}}} }} - {x_i}} \right)}.
\end{equation}
After squaring both sides, we have ${\left( {\frac{{N - 1}}{{\sum\limits_{i\in \cal N} {\frac{{p{e^{ \lambda z{t_i}}}}}{{R + r{t_i}}}} }}} \right)^2} = \frac{{R + r{t_i}}}{{p{e^{ \lambda z{t_i}}}}}\left( {\frac{{N - 1}}{{\sum\limits_{i\in \cal N} {\frac{{p{e^{ \lambda z{t_i}}}}}{{R + r{t_i}}}} }} - {x_i}} \right)$. With simple transformations, we obtain the Nash equilibrium for miner $i$ as shown in~(\ref{Eq:MDGequilibruim}).
\end{proof}

\begin{lemma}
Given
\begin{equation}\label{Eq:Condition1}
\frac{{2(N - 1){e^{ \lambda z{t_i}}}}}{{R + r{t_i}}} < \sum\limits_{i\in \cal N} {\frac{{{e^{ \lambda z{t_i}}}}}{{R + r{t_i}}}},
\end{equation}
the following condition
\begin{equation}\label{Eq:lemmacondition1}
\sum\limits_{i \ne j} {{x_j}} < \frac{{R + r{t_i}}}{{4p{e^{  \lambda z{t_i}}}}}
\end{equation}
is satisfied.
\end{lemma}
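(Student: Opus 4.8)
The plan is to reduce the target inequality \eqref{Eq:lemmacondition1} to the hypothesis \eqref{Eq:Condition1} through a short chain of equivalences, drawing only on the equilibrium characterization already obtained in the analysis of the demand game. Every step is elementary algebra once the right substitutions are identified, so the work is in the bookkeeping rather than in any analytic estimate. First I would invoke the first-order stationarity relation \eqref{Eq:FirstDerivative}, namely $\sum_{j\in\mathcal{N}} x_j = \sqrt{(R+rt_i)\sum_{i\neq j} x_j/(p\,e^{\lambda z t_i})}$, and square both sides to obtain the clean identity
\[
\sum_{i \neq j} x_j = \frac{p\,e^{\lambda z t_i}}{R + r t_i}\left(\sum_{j\in\mathcal{N}} x_j\right)^{2}.
\]
Substituting this into the left-hand side of \eqref{Eq:lemmacondition1} and cancelling the common positive factor $p\,e^{\lambda z t_i}/(R+rt_i)$, the claim becomes $\left(\sum_{j\in\mathcal{N}} x_j\right)^2 < (R+rt_i)^2/\bigl(4p^2 e^{2\lambda z t_i}\bigr)$, and taking the square root of both positive sides reduces the target to $\sum_{j\in\mathcal{N}} x_j < (R+rt_i)/\bigl(2p\,e^{\lambda z t_i}\bigr)$. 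The point worth highlighting is that the factor $\tfrac14$ in the target becomes a factor $\tfrac12$ after the square root, which is exactly what lines it up with the factor $2$ appearing in the hypothesis.

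Next I would substitute the closed-form total demand from \eqref{Eq:Summative}, $\sum_{j\in\mathcal{N}} x_j = (N-1)\big/\sum_{k\in\mathcal{N}} \bigl(p\,e^{\lambda z t_k}/(R+rt_k)\bigr)$. Cross-multiplying in the reduced inequality and cancelling the common factor $p$ then collapses it to $2(N-1)\,e^{\lambda z t_i}/(R+rt_i) < \sum_{k\in\mathcal{N}} e^{\lambda z t_k}/(R+rt_k)$, which is precisely the hypothesis \eqref{Eq:Condition1}. Since each manipulation is an equivalence between positive quantities, the hypothesis implies the conclusion, and the strictness of the inequality propagates cleanly throughout rather than degrading to a non-strict bound.

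The proof presents no genuine analytic difficulty; the one point that deserves care is the logical ordering. Because Lemma~1 is cited inside the proof of Theorem~2, I would be careful to rely only on ingredients that precede the uniqueness argument, specifically the stationarity condition \eqref{Eq:FirstDerivative} and the summed first-order identity \eqref{Eq:Summative} established in Theorem~3, so that no circular dependence on Theorem~2 is introduced. I therefore expect the main obstacle to be purely organizational, ensuring the equilibrium relations used here are themselves free of any appeal to the lemma being proved, rather than anything in the computation itself.
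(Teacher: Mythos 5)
Your proof is correct and follows essentially the same route as the paper: the paper combines the equilibrium characterization~(\ref{Eq:MDGequilibruim}) with~(\ref{Eq:Summative}) to write $\sum_{j\neq i}x_j$ in closed form as~(\ref{Eq:lemmacondition2}) and then observes that the target inequality~(\ref{Eq:lemmacondition1}) is algebraically equivalent to the hypothesis~(\ref{Eq:Condition1}), which is exactly your chain of substitutions (first-order identity, square root, then the summative identity) in a slightly different order. Your explicit remarks on the strictness of the inequalities and on avoiding circularity with Theorem~2 are sound refinements of the same argument, not a different method.
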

\begin{proof}
According to~(\ref{Eq:MDGequilibruim}) and~(\ref{Eq:Summative}), we can obtain
\begin{equation}\label{Eq:lemmacondition2}
\sum\limits_{j \ne i} {{x_j}} = {\left( {\frac{{N - 1}}{{\sum\limits_{j \in \cal N} {\frac{{p{e^{ \lambda z{t_j}}}}}{{R + r{t_j}}}} }}} \right)^2}\frac{{p{e^{ \lambda z{t_i}}}}}{{R + r{t_i}}}.
\end{equation}
After substituting~(\ref{Eq:lemmacondition1}) into~(\ref{Eq:lemmacondition2}), we have
\begin{equation}\label{Eq:lemmacondition3}
\frac{{2(N - 1)p{e^{ \lambda z{t_i}}}}}{{R + r{t_i}}} < \sum\limits_{i \in \cal N} {\frac{{p{e^{ \lambda z{t_i}}}}}{{R + r{t_i}}}},
\end{equation}
which means that the condition in~(\ref{Eq:Condition1}) needs to be ensured. On the contrary, if the condition in~(\ref{Eq:Condition1}) holds, then, the condition in~(\ref{Eq:lemmacondition3}) is satisfied. The proof is now completed.
\end{proof}

Generally, we can use the best-response dynamics for obtaining the Nash equilibrium of the N-player noncooperative game in Stage II~\cite{han2012game}. In the following, we analyze the profit maximization of the CFP in Stage I under uniform pricing.

\subsubsection{Stage I: CFP's Profit Maximization}\label{SubsubSec:ProfitMaximization_Uniform}

Based on the Nash equilibrium of the computing service demand in the MDG~$\mathcal{G}^u= \{\mathcal{N},\{x_i\}_{i \in \mathcal{N}},\{u_i\}_{i \in \mathcal{N}}\}$ in Stage II, the leader of the Stackelberg game, i.e., the CFP, can optimize its pricing strategy in Stage I to maximize its profit defined in~(\ref{Eq:Profit}). Thus, the optimal pricing can be formulated as an optimization problem. By substituting~(\ref{Eq:MDGequilibruim}) into~(\ref{Eq:Profit}), the profit maximization of the CFP is simplified as follows:
\begin{equation}\label{Eq:Profit2}
\begin{aligned}
& \underset{p>0}{\text{maximize}}
& & \Pi(p) = (p - cT)\frac{{N - 1}}{{\sum\limits_{j \in \cal N} {\frac{{p{e^{ \lambda z{t_j}}}}}{{R + r{t_j}}}} }} \\
& \text{subject to}
& & 0\le p \le \overline p.\\
\end{aligned}
\end{equation}
\begin{theorem}
Under uniform pricing, the CFP achieves the globally optimal profit, i.e., profit maximization, under the unique optimal price.
\end{theorem}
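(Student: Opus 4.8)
The plan is to reduce the Stage-I problem in~(\ref{Eq:Profit2}) to a single-variable optimization whose objective turns out to be strictly monotone, so that the optimal price is forced to the boundary and is therefore unique. First I would introduce the shorthand $A \buildrel \Delta \over = \sum_{j \in \cal N} \frac{e^{\lambda z t_j}}{R + r t_j}$, which is a strictly positive constant that does \emph{not} depend on $p$. Using the aggregate equilibrium demand from~(\ref{Eq:Summative}), namely $\sum_{j \in \cal N} x_j^* = \frac{N-1}{pA}$, and substituting into the CFP profit~(\ref{Eq:Profit}) under the uniform price $p_i = p$ gives
\begin{equation}
\Pi(p) = (p - cT) \sum_{j \in \cal N} x_j^* = (p - cT)\frac{N-1}{pA} = \frac{N-1}{A}\left(1 - \frac{cT}{p}\right),
\end{equation}
which is exactly the reduced objective appearing in~(\ref{Eq:Profit2}). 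The crucial structural observation is that the factor of $p$ introduced by the aggregate demand cancels the price variable everywhere except inside the term $cT/p$.

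Next I would establish strict monotonicity. Differentiating the reduced objective yields
\begin{equation}
\frac{d\Pi}{dp} = \frac{N-1}{A}\cdot\frac{cT}{p^2} = \frac{(N-1)cT}{A p^2} > 0,
\end{equation}
for every $p > 0$, since $N \ge 2$, $c > 0$, $T > 0$, and $A > 0$. Hence $\Pi(p)$ is strictly increasing on the feasible interval $(0, \overline p]$. A strictly increasing continuous function attains its maximum on a closed interval uniquely at the right endpoint, so the globally optimal price is $p^* = \overline p$ and no other feasible price attains the same profit. This simultaneously delivers existence (the optimum is achieved), global optimality (the objective is monotone, not merely locally optimal), and uniqueness of the maximizer, which is precisely the content of the theorem; it also matches the claim in the introduction that the CFP is incentivized to set the maximum price.

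The only genuine subtlety, and thus the step I would treat most carefully, is verifying that the closed-form equilibrium~(\ref{Eq:MDGequilibruim}) remains the valid Stage-II Nash equilibrium throughout the range of $p$ over which the CFP optimizes. That formula was obtained from the interior first-order condition, so it is the true best response only when the resulting $x_i^*$ lie in $[\underline x, \overline x]$; as $p$ varies the demands scale and could in principle leave the box, in which case the profit would follow a different, clipped expression. Fortunately the uniqueness condition~(\ref{Eq:Condition}) is independent of $p$, so Theorems~1--3 continue to certify a unique interior equilibrium for all relevant prices, and the monotonicity argument above then applies verbatim. I would state this feasibility caveat explicitly and then invoke the earlier results to close the argument.
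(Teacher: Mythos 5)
Your proof is correct, and it closes the argument by a genuinely different (and in one respect stronger) final step than the paper. Both you and the paper perform the same reduction: substituting the aggregate equilibrium demand~(\ref{Eq:Summative}) into the profit~(\ref{Eq:Profit}) to obtain the single-variable objective in~(\ref{Eq:Profit2}), and both compute the same first derivative~(\ref{Eq:FirstOrderProfit}). At that point the paper goes on to compute the second derivative~(\ref{Eq:SecondOrderProfit}), observes it is negative, and deduces uniqueness of the maximizer from strict concavity --- without ever locating the maximizer. You instead stop at the first derivative, note it is strictly positive for all $p>0$, conclude that $\Pi(p)$ is strictly increasing on $(0,\overline p\,]$, and hence that the unique global maximizer is the boundary point $p^*=\overline p$. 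Your route is more elementary (first-order rather than second-order information) and more informative, since it pins down the optimal price as the price cap; interestingly, the paper itself uses exactly this monotonicity observation later, in Section~\ref{Sec:Simulation}, to explain why the numerically optimal uniform price coincides with $\overline p$, so your proof unifies the theorem with that remark. One caution on your feasibility caveat: you are right that~(\ref{Eq:MDGequilibruim}) is an interior (first-order-condition) solution that could in principle be clipped by the box $[\underline x,\overline x]$ as $p$ varies, and the paper silently ignores this; however, your resolution is slightly optimistic, because condition~(\ref{Eq:Condition}) together with Lemma~1 only guarantees \emph{positivity} of the best responses, not that the equilibrium demands remain inside $[\underline x,\overline x]$ for every $p\in(0,\overline p\,]$. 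A fully rigorous statement would restrict the price range or parameters so that the interior equilibrium stays feasible; since the paper's own proof rests on the same unstated assumption, this does not place your argument behind the paper's.
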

\begin{proof}
From~(\ref{Eq:Profit2}), we have
\begin{equation}
\Pi(p) = \frac{{p - cT}}{p}\frac{{N - 1}}{{\sum\limits_{j \in \cal N} {\frac{{{e^{ \lambda z{t_j}}}}}{{R + r{t_j}}}} }}.
\end{equation}
The first and second derivatives of profit $\Pi(p)$ with respect to price $p$ are given as follows:
\begin{equation}\label{Eq:FirstOrderProfit}
\frac{{d \Pi(p) }}{{d p}} = \frac{{cT}}{{{p^2}}}\frac{{N - 1}}{{\sum\limits_{j \in \cal N} {\frac{{{e^{  \lambda z{t_j}}}}}{{R + r{t_j}}}} }}
\end{equation}
and
\begin{equation}\label{Eq:SecondOrderProfit}
\frac{{{d ^2}\Pi(p) }}{{d {p^2}}} = - \frac{{2cT}}{{{p^2}}}\frac{{N - 1}}{{\sum\limits_{j \in \cal N} {\frac{{{e^{ \lambda z{t_j}}}}}{{R + r{t_j}}}} }}< 0.
\end{equation}
Due to the negativity of~(\ref{Eq:SecondOrderProfit}), the strict concavity of the objective function is ensured. Thus, the CFP is able to achieve the maximum profit with the unique optimal price. The proof is now completed.
\end{proof}
Note that the profit maximization defined in~(\ref{Eq:Profit2}) is a convex optimization problem, and thus it can be solved by standard convex optimization algorithms, e.g., gradient assisted binary search. Under uniform pricing, we have proved that the Nash equilibrium in Stage II is unique and the optimal price in Stage I is also unique. Thus, we can conclude that the Stackelberg equilibrium is unique and accordingly the best-response dynamics algorithm can achieve this unique Stackelberg equilibrium~\cite{han2012game}.

\subsection{Discriminatory Pricing Scheme}\label{SubSec:DiscriminatoryPricing}
Then, we consider the discriminatory pricing scheme, in which the CFP is able to set different unit prices of service demand for different miners. Again, we use the backward induction to analyze the optimal service demand of miners and the profit maximization of the CFP.

\subsubsection{Stage II: Miners' Demand Game}\label{SubsubSec:UtilityMaximization_Discriminatory}
Under discriminatory pricing scheme, the strategy space of the CFP becomes $\{{\bf p}= [p_i]_{i\in \cal N}:0\le p_i \le \overline p\}$. Recall that we prove the existence and uniqueness of MDG~$\mathcal{G}^u= \{\mathcal{N},\{x_i\}_{i \in \mathcal{N}},\{u_i\}_{i \in \mathcal{N}}\}$, given the fixed price from the CFP. Thus, under discriminatory pricing, the existence and uniqueness of the MDG can be still guaranteed. With minor change from Theorem~3, we have the following theorem immediately.
\begin{theorem}
Under uniform pricing, the unique Nash equilibrium demand of miner $i$ can be obtained as follows:
\begin{equation}\label{Eq:MDGequilibruim2}
{x_i}^* = \frac{{N - 1}}{{\sum\limits_{j\in \cal N}{\frac{{p_j{e^{  \lambda z{t_j}}}}}{{R + r{t_j}}}} }} - {\left( {\frac{{N - 1}}{{\sum\limits_{j \in \cal N} {\frac{{p_j{e^{  \lambda z{t_j}}}}}{{R + r{t_j}}}} }}} \right)^2}\frac{{p_i{e^{  \lambda z{t_i}}}}}{{R + r{t_i}}}, \forall i,
\end{equation}
if the following condition
\begin{equation}\label{Eq:Condition3}
\frac{{2(N - 1){p_i}{e^{  \lambda z{t_i}}}}}{{R + r{t_i}}} < \sum\limits_{j\in \cal N} {\frac{{{p_j}{e^{  \lambda z{t_j}}}}}{{R + r{t_j}}}}
\end{equation}
holds.
\end{theorem}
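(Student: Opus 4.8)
The plan is to mirror the derivation of Theorem~3 essentially verbatim, making the single substitution $p \mapsto p_i$ that the discriminatory scheme introduces. The key observation is that in the utility~(\ref{Eq:Utility}) the price enters miner~$i$'s payoff only through its own linear term $-p_i x_i$, so the first- and second-order derivatives of $u_i$ are unchanged apart from this replacement and in particular $\partial^2 u_i/\partial x_i^2 < 0$ still holds. Consequently the strict concavity used in Theorem~1 and the standard-function argument of Theorem~2 both survive with $p$ replaced by $p_i$, which is why the paragraph preceding the statement can invoke existence and uniqueness ``immediately.'' What is left is to re-solve the coupled first-order conditions in closed form.

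First I would record the stationarity condition $\partial u_i/\partial x_i = 0$ for each miner, which now reads
\begin{equation}
\frac{\sum_{i\ne j} x_j}{\left(\sum_{j\in\cal N} x_j\right)^2} = \frac{p_i e^{\lambda z t_i}}{R + r t_i}, \quad \forall i,
\end{equation}
i.e.\ exactly~(\ref{Eq:FirstOrderDerivative}) with $p_i$ in place of $p$. Summing this identity over all $i\in\cal N$ and using $\sum_{i} \sum_{i\ne j} x_j = (N-1)\sum_j x_j$ collapses the left-hand side and yields the aggregate demand
\begin{equation}
\sum_{j\in\cal N} x_j = \frac{N-1}{\sum_{j\in\cal N} \frac{p_j e^{\lambda z t_j}}{R + r t_j}},
\end{equation}
the discriminatory analogue of~(\ref{Eq:Summative}). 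Substituting this aggregate back into the per-miner condition and writing $\sum_{i\ne j} x_j = \sum_j x_j - x_i$ then isolates $x_i^*$ and reproduces~(\ref{Eq:MDGequilibruim2}) after an elementary rearrangement.

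Finally, the side condition~(\ref{Eq:Condition3}) plays the role that~(\ref{Eq:Condition}) played in the uniform case: it guarantees the interior, strictly positive solution and, through the standard-function property, the uniqueness of the equilibrium. I would obtain it exactly as in Lemma~1 by replacing the common price with $p_i$, translating the bound $\sum_{i\ne j} x_j < (R + r t_i)/(4 p_i e^{\lambda z t_i})$ into~(\ref{Eq:Condition3}). I do not expect a genuine obstacle, since the whole argument is a transcription of Theorem~3 and Lemma~1 with per-miner prices; the only point needing care is the bookkeeping in the summation step, in particular making sure the factor $(N-1)$ emerges correctly and that the uniqueness bound carries the factor of two appearing in~(\ref{Eq:Condition3}) rather than the weaker factor of one that bare positivity would require.
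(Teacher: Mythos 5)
Your proposal is correct and takes essentially the same approach as the paper: the paper's own ``proof'' of this theorem is a one-line remark that the steps are identical to the uniform-pricing case and are omitted for brevity, and your transcription of Theorem~3 and Lemma~1 with $p$ replaced by $p_i$ (per-miner first-order conditions, summing to obtain the aggregate demand, substituting back to isolate $x_i^*$, and recovering the condition from the bound $\sum_{j\ne i} x_j < (R+rt_i)/(4 p_i e^{\lambda z t_i})$) is precisely the argument the paper intends. Your direct elimination via $\sum_{j\ne i} x_j = \sum_{j} x_j - x_i$ is marginally cleaner than the paper's square-root-then-square manipulation in Theorem~3, but it is the same derivation in substance.
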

\begin{proof}
The steps of proof are similar to those in the case of uniform pricing as shown in Section~\ref{SubsubSec:UtilityMaximization_Uniform}, and thus we omit them for brevity.
\end{proof}

We next analyze the profit maximization of the CFP in Stage I under discriminatory pricing to further investigate the Stackelberg equilibrium.

\subsubsection{Stage I: CFP's Profit Maximization}\label{SubsubSec:ProfitMaximization_Discriminatory}

Similar to that in Section~\ref{SubsubSec:ProfitMaximization_Uniform}, we analyze the profit maximization with the analytical result from Theorem~5, i.e., the Nash equilibrium of the computing service demand in Stage II. After substituting~(\ref{Eq:MDGequilibruim2}) into~(\ref{Eq:Profit}), we have the following optimization,
\begin{equation}\label{Eq:Profit3}
\begin{aligned}
& \underset{\bf p>0}{\text{maximize}}
& & \Pi({\bf p}) = \sum\limits_{i \in \cal N}\left(p_i - cT\frac{{N - 1}}{{\sum\limits_{j\in \cal N} {\frac{{p_j{e^{ \lambda z{t_j}}}}}{{R + r{t_j}}}} }}\right) \\
& \text{subject to}
& & 0\le p_i \le \overline p, \forall i.\\
\end{aligned}
\end{equation}

\begin{theorem}
$\Pi({\bf p})$ is concave on each $p_i$, when $\sum\limits_{i \ne j} {({a_i} + {a_j})\left( {1 - \frac{{N\frac{{{p_j}}}{{{a_j}}}}}{{\sum\limits_{j \in \cal N} {\frac{{{p_j}}}{{{a_j}}}} }}} \right)} \le 0$, and decreasing on each $p_i$ when $\sum\limits_{i \ne j} {({a_i} + {a_j})\left( {1 - \frac{{N\frac{{{p_j}}}{{{a_j}}}}}{{\sum\limits_{j \in \cal N} {\frac{{{p_j}}}{{{a_j}}}} }}} \right)} >0$, provided that the following condition
\begin{equation}\label{Eq:Condition4}
\frac{{{p_i}}}{{{a_i}}} \ge \frac{{\sum\limits_{j \in \cal N} {\frac{{{p_j}}}{{{a_j}}}} }}{{{{(N - 1)}^2}}}
\end{equation}
is satisfied, where $a_i =(R+rt_i)e^{-\lambda z t_i}$.
\end{theorem}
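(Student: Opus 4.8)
The plan is to establish both claims by directly analysing the one-dimensional behaviour of $\Pi(\mathbf{p})$ in a single coordinate $p_i$, holding the remaining prices $\mathbf{p}_{-i}$ fixed. First I would substitute the Stage~II equilibrium demand from Theorem~5, i.e.\ (\ref{Eq:MDGequilibruim2}), into the profit (\ref{Eq:Profit}) to recover the explicit expression (\ref{Eq:Profit3}). Using the shorthand $a_j = (R+rt_j)e^{-\lambda z t_j}$ and $S = \sum_{j\in\mathcal{N}} p_j/a_j$, the equilibrium demand reads $x_i^* = \frac{N-1}{S} - \frac{(N-1)^2}{S^2}\frac{p_i}{a_i}$, so that $\Pi$ depends on the chosen coordinate $p_i$ both directly, through the factor $(p_i - cT)$, and indirectly, through $S$, on which every demand $x_j^*$ depends. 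The whole argument then reduces to controlling the sign of $\partial^2\Pi/\partial p_i^2$ and, in the complementary case, of $\partial\Pi/\partial p_i$.

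Second, I would carry out the differentiation with the chain rule, recording the single fact $\partial S/\partial p_i = 1/a_i$. Differentiating $\Pi = \sum_j (p_j - cT)x_j^*$ once produces a first-order term $x_i^*$ together with a collection of rational terms in $S$ coming from $\sum_j (p_j-cT)\,\partial x_j^*/\partial p_i$; here it is convenient to separate the own index $j=i$ from the cross indices $j\neq i$. Differentiating a second time and clearing denominators, the goal is to reorganise the result into the product of a manifestly positive prefactor (a power of $N-1$ divided by $a_i^2 S^3$, up to positive constants) and the bracketed sum $\sum_{i\neq j}(a_i+a_j)\big(1 - \frac{N p_j/a_j}{S}\big)$ appearing in the statement. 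This factorisation is exactly what makes the sign of $\partial^2\Pi/\partial p_i^2$ coincide with the sign of the displayed summation.

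Third, I would invoke the standing hypothesis (\ref{Eq:Condition4}), $\frac{p_i}{a_i}\ge \frac{S}{(N-1)^2}$, to sign the auxiliary factors that arise during the simplification: this lower bound keeps the prefactor and the relevant intermediate brackets nonnegative, so that they cannot reverse the sign fixed by the displayed sum. The dichotomy then follows at once. When $\sum_{i\neq j}(a_i+a_j)\big(1 - \frac{N p_j/a_j}{S}\big)\le 0$, we obtain $\partial^2\Pi/\partial p_i^2\le 0$, which is precisely concavity of $\Pi$ in $p_i$. When the sum is strictly positive, I would instead show that the same positivity forces $\partial\Pi/\partial p_i\le 0$ --- the first derivative and the second-derivative bracket share the controlling factor $\big(1 - \frac{N p_j/a_j}{S}\big)$ summed over $j\neq i$ --- so that $\Pi$ is nonincreasing in $p_i$ on the feasible interval.

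I expect the main obstacle to be the bookkeeping in the second differentiation and the subsequent reorganisation into the single clean factor $\sum_{i\neq j}(a_i+a_j)\big(1 - \frac{N p_j/a_j}{S}\big)$: the coupling of all coordinates through $S$ yields several $S^{-2}$ and $S^{-3}$ terms that must be combined over a common denominator and collapsed exactly, and one must verify that condition (\ref{Eq:Condition4}) is strong enough to sign every leftover factor rather than merely the most conspicuous one. A secondary subtlety is the feasibility role of (\ref{Eq:Condition4}), which should be checked for consistency with $x_i^*\ge 0$ so that the analysis stays within the admissible strategy region.
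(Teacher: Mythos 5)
Your overall route is the same as the paper's: differentiate the reduced profit (\ref{Eq:Profit3}) coordinate-wise, show $\partial^2\Pi/\partial p_i^2\le 0$ when the displayed sum is nonpositive, and show $\partial\Pi/\partial p_i<0$ when it is positive (the paper first splits $\Pi$ into a revenue part $g({\bf p})=\sum_i p_ix_i^*$ and a cost part $f({\bf p})=-cT\sum_i x_i^*$ whose Hessian is handled separately, but that is only bookkeeping). However, one central step of your plan would fail as stated. Writing $S=\sum_{h\in\mathcal{N}}p_h/a_h$ as in your sketch, the exact second derivative (\ref{Eq:PriceFinalSecondOrder}) contains brackets of the form $1-2\frac{N-1}{S}\frac{p_j}{a_j}$ (note the factor $2$), a subtracted cross term proportional to $\frac{p_j}{a_j}\bigl(1-\frac{N-1}{S}\frac{p_i}{a_i}\bigr)$, and the cost contribution $-\frac{2(N-1)cT}{a_i^2S^3}$; it is \emph{not} a positive multiple of $\sum_{j\ne i}(a_i+a_j)\bigl(1-\frac{N}{S}\frac{p_j}{a_j}\bigr)$, and its sign does not coincide with the sign of that sum. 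What is true, and what the paper actually proves, is a one-sided estimate: since $2(N-1)\ge N$, each bracket $1-2\frac{N-1}{S}\frac{p_j}{a_j}$ is bounded above by $1-\frac{N}{S}\frac{p_j}{a_j}$, and the subtracted cross term may be dropped because positivity of the equilibrium demand forces $1-\frac{N-1}{S}\frac{p_i}{a_i}>0$. Hence the implication runs only one way: sum $\le 0$ implies concavity. Note also that this branch rests on demand positivity, not on (\ref{Eq:Condition4}); your allocation of the hypothesis to the concavity part is inverted relative to where it is actually needed.

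The second gap is in your decreasing branch. The first derivative of $\Pi$ contains the strictly positive term $\frac{(N-1)cT}{a_iS^2}$ coming from the cost part, and the chain of bounds in (\ref{Eq:Decreasing}) reduces the problem to signing $cT-p_{\min}\frac{(N-1)^2}{S}\frac{p_i}{a_i}$, where $p_{\min}=\min_jp_j$. This is exactly where (\ref{Eq:Condition4}) enters: it gives $\frac{(N-1)^2}{S}\frac{p_i}{a_i}\ge 1$, so the expression is at most $cT-p_{\min}$. But this is negative only under the additional assumption $p_{\min}>cT$ (prices exceed marginal cost), which the paper invokes explicitly and your sketch never introduces; condition (\ref{Eq:Condition4}) alone cannot close the argument. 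With these repairs --- replace the exact factorization by the one-sided bounds, use demand positivity for concavity, reserve (\ref{Eq:Condition4}) together with $p_{\min}>cT$ for monotonicity --- your plan becomes essentially the paper's proof.
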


\begin{figure*}[ht]\footnotesize
\begin{eqnarray}\label{Eq:PriceFinal}
g({\bf p})
&=& \sum\limits_{j \ne h} {\left( {{a_h}\left( {1 - \frac{{{p_h}}}{{{a_h}}}\frac{{N - 1}}{{\sum\limits_{h \in \cal N} {\frac{{{p_h}}}{{{a_h}}}} }}} \right)\left( {1 - \frac{{{p_j}}}{{{a_j}}}\frac{{N - 1}}{{\sum\limits_{h \in \cal N} {\frac{{{p_h}}}{{{a_h}}}} }}} \right)} \right)}.
\end{eqnarray}
\begin{eqnarray}\label{Eq:PriceFinalFirstOrder}
\frac{{\partial g({\bf{p}})}}{{\partial {p_i}}} = \sum\limits_{j \ne i} {\left( {\left( {{a_i} + {a_j}} \right)\left( {\frac{{ - \frac{{N - 1}}{{{a_i}}}\sum\limits_{h \ne i} {\frac{{{p_h}}}{{{a_h}}}} }}{{{{\left( {\sum\limits_{h \in \cal N} {\frac{{{p_h}}}{{{a_h}}}} } \right)}^2}}}\left( {1 - \frac{{N - 1}}{{\sum\limits_{h \in \cal N} {\frac{{{p_h}}}{{{a_h}}}} }}\frac{{{p_j}}}{{{a_j}}}} \right) + \frac{{\frac{{N - 1}}{{{a_i}}}\frac{{{p_j}}}{{{a_j}}}}}{{{{\left( {\sum\limits_{h \in \cal N} {\frac{{{p_h}}}{{{a_h}}}} } \right)}^2}}}\left( {1 - \frac{{N - 1}}{{\sum\limits_{h \in \cal N} {\frac{{{p_h}}}{{{a_h}}}} }}\frac{{{p_i}}}{{{a_i}}}} \right)} \right)} \right)}.
\end{eqnarray}
\hrulefill
\end{figure*}
\begin{figure*}[ht]\footnotesize
\begin{eqnarray}\label{Eq:VIconvexset}
&&\sum\limits_{i \ne j} {\left( {({a_i} + {a_j})\left( {\sum\limits_{h \in \cal N} {\frac{{\lambda p_h^{'} + (1 - \lambda )p_h^{''}}}{{{a_h}}}} - N\frac{{\lambda p_j^{'} + (1 - \lambda )p_j^{''}}}{{{a_j}}}} \right)} \right)} \nonumber \\
&=& \sum\limits_{i \ne j} {\left( {({a_i} + {a_j})\left( {\lambda \sum\limits_{h \in \cal N} {\frac{{p_h^{'}}}{{{a_h}}}} - (1 - \lambda )\sum\limits_{h \in \cal N} {\frac{{p_h^{''}}}{{{a_h}}}} - \lambda N\frac{{p_j^{'}}}{{{a_j}}} - (1 - \lambda )N\frac{{p_j^{''}}}{{{a_j}}}} \right)} \right)} \nonumber \\
&=& \lambda \sum\limits_{i \ne j} {\left( {({a_i} + {a_j})\left( {\sum\limits_{h \in \cal N} {\frac{{p_h^{'}}}{{{a_h}}}} - N\frac{{p_j^{''}}}{{{a_j}}}} \right)} \right)} + (1 - \lambda )\sum\limits_{i \ne j} {\left( {({a_i} + {a_j})\left( {(1 - \lambda )\sum\limits_{h \in \cal N} {\frac{{p_h^{'}}}{{{a_h}}}} - N\frac{{p_j^{''}}}{{{a_j}}}} \right)} \right)} \le 0.
\end{eqnarray}
\hrulefill
\end{figure*}

\begin{figure*}[ht]\scriptsize
\begin{eqnarray}\label{Eq:Decreasing}
&&\frac{{\partial \Pi({\bf p}) }}{{{\partial p_i}}}= \sum\limits_{j \ne i} {\left( {\left( {{a_i} + {a_j}} \right)\left( {\frac{{\frac{{N - 1}}{{{a_i}}}\sum\limits_{h \ne i} {\frac{{{p_h}}}{{{a_h}}}} }}{{{{\left( {\sum\limits_{h \in \cal N} {\frac{{{p_h}}}{{{a_h}}}} } \right)}^2}}}\left( {1 - \frac{{N - 1}}{{\sum\limits_{h \in \cal N} {\frac{{{p_h}}}{{{a_h}}}} }}\frac{{{p_j}}}{{{a_j}}}} \right) + \frac{{\frac{{N - 1}}{{{a_i}{a_j}}}{p_j}}}{{{{\left( {\sum\limits_{h \in \cal N} {\frac{{{p_h}}}{{{a_h}}}} } \right)}^2}}}\left( {1 - \frac{{N - 1}}{{\sum\limits_{h \in \cal N} {\frac{{{p_h}}}{{{a_h}}}} }}\frac{{{p_i}}}{{{a_i}}}} \right)} \right)} \right)} + \frac{{\frac{{N - 1}}{{{a_i}}}cT}}{{{{\left( {\sum\limits_{h \in \cal N} {\frac{{{p_h}}}{{{a_h}}}} } \right)}^2}}} \nonumber \\
&\le& \frac{{\frac{{N - 1}}{{{a_i}}}}}{{{{\left( {\sum\limits_{h \in \cal N} {\frac{{{p_h}}}{{{a_h}}}} } \right)}^2}}}\left( {\sum\limits_{j \ne i} {\left( {\left( {{a_i} + {a_j}} \right)\left( { - \sum\limits_{h \in \cal N} {\frac{{{p_h}}}{{{a_h}}}} \left( {1 - \frac{{N - 1}}{{\sum\limits_{h \in \cal N} {\frac{{{p_h}}}{{{a_h}}}} }}\frac{{{p_j}}}{{{a_j}}}} \right) + \frac{{{p_j}}}{{{a_j}}}\left( {1 - \frac{{N - 1}}{{\sum\limits_{h \in \cal N} {\frac{{{p_h}}}{{{a_h}}}} }}\frac{{{p_i}}}{{{a_i}}}} \right)} \right)} \right)} + cT} \right) \nonumber \\
&=& \underbrace { - \frac{{\frac{{N - 1}}{{{a_i}}}}}{{{{\left( {\sum\limits_{h \in \cal N} {\frac{{{p_h}}}{{{a_h}}}} } \right)}^2}}}\sum\limits_{j \ne i} {\left( {\left( {{a_i} + {a_j}} \right)\left( { \sum\limits_{h \in \cal N} {\frac{{{p_h}}}{{{a_h}}}} \left( {1 - \frac{N}{{\sum\limits_{h \in \cal N} {\frac{{{p_h}}}{{{a_h}}}} }}\frac{{{p_j}}}{{{a_j}}}} \right)} \right)} \right)} }_{ < 0} + \frac{{\frac{{N - 1}}{{{a_i}}}}}{{{{\left( {\sum\limits_{h \in \cal N} {\frac{{{p_h}}}{{{a_h}}}} } \right)}^2}}}\left( {cT - \sum\limits_{j \ne i} {\left( {\left( {{a_i} + {a_j}} \right)\frac{{N - 1}}{{\sum\limits_{h \in \cal N} {\frac{{{p_h}}}{{{a_h}}}} }}\frac{{{p_i}}}{{{a_i}}}\frac{{{p_j}}}{{{a_j}}}} \right)} } \right) \nonumber \\
&=& - \frac{{\frac{{N - 1}}{{{a_i}}}}}{{{{\left( {\sum\limits_{h \in \cal N} {\frac{{{p_h}}}{{{a_h}}}} } \right)}^2}}}\sum\limits_{j \ne i} {\left( {\left( {{a_i} + {a_j}} \right)\left( { - \sum\limits_{h \in \cal N} {\frac{{{p_h}}}{{{a_h}}}} \left( {1 - \frac{N}{{\sum\limits_{h \in \cal N} {\frac{{{p_h}}}{{{a_h}}}} }}\frac{{{p_j}}}{{{a_j}}}} \right)} \right)} \right)} + \frac{{\frac{{N - 1}}{{{a_i}}}}}{{{{\left( {\sum\limits_{h \in \cal N} {\frac{{{p_h}}}{{{a_h}}}} } \right)}^2}}}\left( {cT - \sum\limits_{j \ne i} {\left( {\underbrace { {\frac{{{a_i} + {a_j}}}{{{a_j}}}}}_{ < 1}\frac{{N - 1}}{{\sum\limits_{h \in \cal N} {\frac{{{p_h}}}{{{a_h}}}} }}\frac{{{p_i}{p_j}}}{{{a_i}}}} \right)} } \right) \nonumber \\
&\le& - \frac{{\frac{{N - 1}}{{{a_i}}}}}{{{{\left( {\sum\limits_{h \in \cal N} {\frac{{{p_h}}}{{{a_h}}}} } \right)}^2}}}\sum\limits_{j \ne i} {\left( {\left( {{a_i} + {a_j}} \right)\left( { - \sum\limits_{h \in \cal N} {\frac{{{p_h}}}{{{a_h}}}} \left( {1 - \frac{N}{{\sum\limits_{h \in \cal N} {\frac{{{p_h}}}{{{a_h}}}} }}\frac{{{p_j}}}{{{a_j}}}} \right)} \right)} \right)} + \frac{{\frac{{N - 1}}{{{a_i}}}}}{{{{\left( {\sum\limits_{h \in \cal N} {\frac{{{p_h}}}{{{a_h}}}} } \right)}^2}}}\left( {cT - p_{\min} \frac{{N - 1}}{{\sum\limits_{h \in \cal N} {\frac{{{p_h}}}{{{a_h}}}} }}\frac{{N - 1{p_i}}}{{{a_i}}}} \right) \nonumber \\
&=& \underbrace { - \frac{{\frac{{N - 1}}{{{a_i}}}}}{{{{\left( {\sum\limits_{h \in \cal N} {\frac{{{p_h}}}{{{a_h}}}} } \right)}^2}}}\sum\limits_{j \ne i} {\left( {\left( {{a_i} + {a_j}} \right)\left( { - \sum\limits_{h \in \cal N} {\frac{{{p_h}}}{{{a_h}}}} \left( {1 - \frac{N}{{\sum\limits_{h \in \cal N} {\frac{{{p_h}}}{{{a_h}}}} }}\frac{{{p_j}}}{{{a_j}}}} \right)} \right)} \right)} }_{ < 0} + \underbrace {\frac{{\frac{{N - 1}}{{{a_i}}}}}{{{{\left( {\sum\limits_{h \in \cal N} {\frac{{{p_h}}}{{{a_h}}}} } \right)}^2}}}\left( {cT - {p_{\min }}\frac{{{{\left( {N - 1} \right)}^2}}}{{\sum\limits_{h \in \cal N} {\frac{{{p_h}}}{{{a_h}}}} }}\frac{{{p_i}}}{{{a_i}}}} \right)}_{ < 0} < 0.
\end{eqnarray}
\hrulefill
\end{figure*}
\begin{proof}
We firstly decompose the objective function in~(\ref{Eq:Profit3}) into two parts, namely, $\sum\limits_{i} c T{x_i^*}$ and ${\sum\limits_{i} p_i}{x_i^*}$. Then, we analyze the properties of each part. We define
\begin{equation}\label{Eq:cost}
f({\bf p}) = - cT{x_i^*} =  - cT\frac{{N - 1}}{{\sum\limits_{j \in \cal N} {\frac{{p_j{e^{ \lambda z{t_j}}}}}{{R + r{t_j}}}} }}.
\end{equation}
Let $a_j =(R+rt_j)e^{-\lambda z t_j}$, and we have $f({\bf p}) = \frac{{ - cT(N - 1)}}{{\sum\limits_{j \in \cal N} {\frac{{{p_j}}}{{{a_j}}}} }}$. Then, we obtain the first and the second partial derivatives of~(\ref{Eq:cost}) with respect to $p_i$ as follows.
\begin{equation}
\frac{{\partial f({\bf{p}})}}{{\partial {p_i}}} = \frac{{(N - 1)cT}}{{{a_i}{{\left( {\sum\limits_{j \in \cal N} {\frac{{{p_j}}}{{{a_j}}}} } \right)}^2}}},
\end{equation}
\begin{equation}
\frac{{{\partial ^2}f({\bf{p}})}}{{\partial {p_i}^2}} = \frac{{ - 2(N - 1)cT}}{{{a_i}^2{{\left( {\sum\limits_{j \in \cal N} {\frac{{{p_j}}}{{{a_j}}}} } \right)}^3}}}.
\end{equation}
Further, we have
\begin{equation}
\frac{{\partial f({\bf{p}})}}{{\partial {p_i}{p_j}}} = \frac{{ - 2(N - 1)cT}}{{{a_i}{a_j}{{\left( {\sum\limits_{j\in \cal N} {\frac{{{p_j}}}{{{a_j}}}} } \right)}^3}}}.
\end{equation}
Thus, we can obtain the Hessian matrix of $f({\bf p})$, which is expressed as:
\begin{equation}
{\nabla ^2}f({\bf p}) = \frac{{ - 2(N - 1)cT}}{{{{\left( {\sum\limits_{j\in \cal N} {\frac{{{p_j}}}{{{a_j}}}} } \right)}^3}}}\left[ {\begin{array}{*{20}{c}}
{\frac{1}{{{a_1}^2}}}&{\frac{1}{{{a_1}{a_2}}}}& \cdots &{\frac{1}{{{a_1}{a_N}}}}\\
{\frac{1}{{{a_2}{a_1}}}}&{\frac{1}{{{a_2}^2}}}& \cdots &{\frac{1}{{{a_2}{a_N}}}}\\
 \vdots & \vdots & \ddots & \vdots \\
{\frac{1}{{{a_N}{a_1}}}}&{\frac{1}{{{a_N}{a_2}}}}& \cdots &{\frac{1}{{{a_N}^2}}}
\end{array}} \right].
\end{equation}
For each $i \in \cal N$, we have ${\frac{1}{{{a_i}^2}}}>0$. Thus, the diagonal elements of the Hessian matrix are all larger than zero, and the principle minors are equal to zero. Therefore, the Hessian matrix of $f({\bf p})$ is semi-negative definite.

Then, we analyze the properties of ${\sum\limits_{i} p_i}{x_i^*}$. We first define
\begin{equation}\label{Eq:Price}
g({\bf p}) = \sum\limits_{i \in \cal N} {{p_i}} {x_i}^* = \frac{{\sum\limits_{j \ne i} {{a_i}{x_i}{x_j}} }}{{{{\left( {\sum\limits_{j \ne i} {{x_j}} } \right)}^2}}}.
\end{equation}
By substituting~(\ref{Eq:MDGequilibruim2}) into~(\ref{Eq:Price}), we can obtain the final expression for $g({\bf p})$, which can be rewritten as in~(\ref{Eq:PriceFinal}). Then, we derive the first order and the second partial derivatives of~(\ref{Eq:PriceFinal}) with respect to $p_i$ as shown in~(\ref{Eq:PriceFinalFirstOrder}) and~(\ref{Eq:PriceFinalSecondOrder}). Since we have $x_i = \frac{{N - 1}}{{\sum\limits_{h \in \cal N} {\frac{{{p_h}}}{{{a_h}}}} }} - \frac{{{p_i}}}{{{a_i}}}{\left( {\frac{{N - 1}}{{\sum\limits_{h \in \cal N} {\frac{{{p_h}}}{{{a_h}}}} }}} \right)^2} = \frac{{N - 1}}{{\sum\limits_{h \in \cal N} {\frac{{{p_h}}}{{{a_h}}}} }}\left( {1 - \frac{{N - 1}}{{\sum\limits_{h \in \cal N} {\frac{{{p_h}}}{{{a_h}}}} }}\frac{{{p_i}}}{{{a_i}}}} \right) >0 $, ${1 - \frac{{N - 1}}{{\sum\limits_{h \in \cal N} {\frac{{{p_h}}}{{{a_h}}}} }}\frac{{{p_i}}}{{{a_i}}}} >0$. When $\sum\limits_{i \ne j} {({a_i} + {a_j})\left( {1 - \frac{{N\frac{{{p_j}}}{{{a_j}}}}}{{\sum\limits_{j\in \cal N} {\frac{{{p_j}}}{{{a_j}}}} }}} \right)} \le 0$, it is observed that $\frac{{{\partial ^2}g({\bf{p}})}}{{\partial {p_i}^2}} < 0$, i.e., $g({\bf p})$ is concave on each $p_i$. Now we prove that $\Pi({\bf p})$ is a monotonically decreasing function with respect to $p_i$, when $\sum\limits_{i \ne j} {({a_i} + {a_j})\left( {1 - \frac{{N\frac{{{p_j}}}{{{a_j}}}}}{{\sum\limits_{j\in \cal N} {\frac{{{p_j}}}{{{a_j}}}} }}} \right)} > 0$. The steps are shown in~(\ref{Eq:Decreasing}), where ${{p_{\min }} = \min \{ {p_1},{p_2}, \ldots ,{p_N}\} }$. Practically, $p_{\min } > cT$. Thus, with some manipulations, we can prove $\frac{{\partial \Pi }}{{\partial {p_i}}}<0$ when $\sum\limits_{i \ne j} {({a_i} + {a_j})\left( {1 - \frac{{N\frac{{{p_j}}}{{{a_j}}}}}{{\sum\limits_{j \in \cal N} {\frac{{{p_j}}}{{{a_j}}}} }}} \right)} > 0$, if the condition in~(\ref{Eq:Condition4}) holds. The proof is now completed.
\end{proof}

\begin{theorem}
Under discriminatory pricing, the CFP achieves the profit maximization by finding the unique optimal pricing vector.
\end{theorem}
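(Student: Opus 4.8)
The plan is to recast the Stage-I profit maximization (\ref{Eq:Profit3}) as a variational inequality over the CFP's price set and then invoke the standard existence/uniqueness machinery of VI theory, as announced in the introduction. First I would observe that the feasible set $K = \{\mathbf{p} \in \mathbb{R}^N : 0 \le p_i \le \overline p,\ \forall i\}$ is a closed box and hence nonempty, convex and compact. I would then associate with (\ref{Eq:Profit3}) the problem $\mathrm{VI}(K,F)$ with $F(\mathbf{p}) := -\nabla_{\mathbf{p}}\Pi(\mathbf{p})$, i.e., finding $\mathbf{p}^* \in K$ such that $\langle F(\mathbf{p}^*), \mathbf{p} - \mathbf{p}^* \rangle \ge 0$ for all $\mathbf{p} \in K$. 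Because every $a_j>0$ and the denominator $\sum_{h \in \mathcal{N}} p_h/a_h$ stays bounded away from zero for positive prices, $\Pi$ is continuously differentiable on $K$ and $F$ is continuous; the classical existence theorem for a VI with a continuous map on a compact convex set then yields at least one solution $\mathbf{p}^*$, which by the first-order optimality conditions is a maximizer of $\Pi$ over $K$.

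For uniqueness I would use the VI principle that a strictly monotone map on a convex set admits at most one solution, so it would suffice to argue that $F=-\nabla\Pi$ is strictly monotone on $K$, namely $\langle \nabla\Pi(\mathbf{p}') - \nabla\Pi(\mathbf{p}''),\, \mathbf{p}' - \mathbf{p}'' \rangle < 0$ whenever $\mathbf{p}' \ne \mathbf{p}''$. The subtlety is that Theorem~6 supplies only coordinatewise information — $\partial^2\Pi/\partial p_i^2 < 0$ on the concave regime and $\partial\Pi/\partial p_i < 0$ off it — rather than joint negative definiteness of the full Hessian, so strict monotonicity cannot simply be read off globally. I would therefore organize the argument around the two regimes identified in Theorem~6. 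Let $K_c \subseteq K$ be the subset on which the concavity condition of Theorem~6 holds. After clearing the positive factor $\sum_{h \in \mathcal{N}} p_h/a_h$, that condition is linear in $\mathbf{p}$, and equation (\ref{Eq:VIconvexset}) verifies precisely that a convex combination $\lambda\mathbf{p}' + (1-\lambda)\mathbf{p}''$ of points of $K_c$ again satisfies it; hence $K_c$ is the intersection of $K$ with finitely many half-spaces and is convex.

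It then remains to glue the two regimes. On the convex set $K_c$ the coordinatewise strict concavity supplied by Theorem~6, together with the VI/best-response structure, would pin down a unique candidate maximizer. Off $K_c$, Theorem~6 shows $\partial\Pi/\partial p_i < 0$ for every $i$ (under condition (\ref{Eq:Condition4}) and the practical assumption $p_{\min}>cT$), so $\Pi$ is strictly decreasing in each coordinate there and admits no interior stationary point; consequently any optimizer is driven back onto $K_c$ and none can survive in the complementary region. Combining strict-monotonicity on the convex set $K_c$ with coordinatewise strict descent outside it rules out two distinct maximizers, and VI uniqueness delivers a single optimal price vector $\mathbf{p}^*$.

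The hard part is exactly this absence of global concavity: since Theorem~6 furnishes only conditional, coordinatewise concavity, I cannot declare $F$ strictly monotone on all of $K$ outright, and must instead lean on the convexity of the concave region $K_c$ established in (\ref{Eq:VIconvexset}) and carefully stitch the strict-descent behavior outside $K_c$ to the strict-concavity behavior inside it, ensuring no maximizer persists on the interface between the regimes. Once the uniqueness of $\mathbf{p}^*$ is secured, combining it with the unique Stage-II Nash equilibrium of Theorem~5 yields the unique Stackelberg equilibrium under discriminatory pricing, completing the argument.
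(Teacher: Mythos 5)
Your proposal follows essentially the same route as the paper's own proof: backward induction into the Stage-I problem (\ref{Eq:Profit3}), a VI reformulation with $F=-\nabla\Pi$, convexity of the ``concave regime'' established exactly as in (\ref{Eq:VIconvexset}), the coordinatewise-decreasing property of Theorem~6 (under (\ref{Eq:Condition4}) and $p_{\min}>cT$) to exclude optimizers outside that regime, and the strictly-monotone-map principle to get at most one solution. The one structural difference is in the existence half: you pose the VI over the compact box $[0,\overline p]^N$ and invoke the classical existence theorem for continuous maps on compact convex sets, whereas the paper poses VI$({\cal K},F)$ on the concave region ${\cal K}$ itself and argues existence informally (``the optimal value of profit is achieved in the interior area''). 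Your existence argument is the cleaner of the two.

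The substantive point is that the ``hard part'' you flag is precisely the step the paper does not actually close. The paper claims strict monotonicity of $F$ on ${\cal K}$ by noting $\partial^2\Pi/\partial p_i^2<0$, concluding that $-\nabla_{p_i}\Pi$ is ``increasing on each $p_i$,'' and hence that every term $(p_i'-p_i'')\bigl(F_i({\bf p}')-F_i({\bf p}'')\bigr)$ is nonnegative. But $F_i$ depends on \emph{all} coordinates of ${\bf p}$, not on $p_i$ alone, so this reasoning is valid only when ${\bf p}'$ and ${\bf p}''$ differ in a single coordinate; for general pairs, strict monotonicity of $F$ requires controlling the cross-partials $\partial^2\Pi/\partial p_i\partial p_j$ (e.g., positive definiteness of $-\nabla^2\Pi$ via diagonal dominance or a P-matrix property), which the paper never verifies. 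So your instinct that coordinatewise concavity cannot simply be upgraded to joint strict monotonicity is correct, and your proposed architecture (unique maximizer on the convex set $K_c$, strict coordinatewise descent off it, no maximizer surviving on the interface) is the right one---but as written your ``stitching'' step is a plan rather than a proof, and carrying it out would require supplying exactly the cross-coordinate estimate that is also missing from the paper.
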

\begin{proof}
\begin{figure*}[ht]\footnotesize
\begin{eqnarray}\label{Eq:PriceFinalSecondOrder}
\frac{{{\partial ^2}g({\bf{p}})}}{{\partial {p_i}^2}}
&=& \sum\limits_{j \ne i} {\left( {\left( {{a_i} + {a_j}} \right)\left( {\frac{{2\frac{{N - 1}}{{{a_i}^2}}\sum\limits_{h \ne i} {\frac{{{p_h}}}{{{a_h}}}} }}{{{{\left( {\sum\limits_{h \in \cal N} {\frac{{{p_h}}}{{{a_h}}}} } \right)}^3}}}\left( {1 - 2\frac{{N - 1}}{{\sum\limits_{h \in \cal N} {\frac{{{p_h}}}{{{a_h}}}} }}\frac{{{p_j}}}{{{a_j}}}} \right) - \frac{{2\frac{{N - 1}}{{{a_i}^2}}\frac{{{p_j}}}{{{a_j}}}}}{{{{\left( {\sum\limits_{h \in \cal N} {\frac{{{p_h}}}{{{a_h}}}} } \right)}^3}}}\left( {1 - \frac{{N - 1}}{{\sum\limits_{h \in \cal N} {\frac{{{p_h}}}{{{a_h}}}} }}\frac{{{p_i}}}{{{a_i}}}} \right)} \right)} \right)}.
\end{eqnarray}
\begin{eqnarray}
\frac{{{\partial ^2}g({\bf{p}})}}{{\partial {p_i}^2}} &=& \frac{{2\frac{{N - 1}}{{{a_i}^2}}\sum\limits_{h \ne i} {\frac{{{p_h}}}{{{a_h}}}} }}{{{{\left( {\sum\limits_{h \in \cal N} {\frac{{{p_h}}}{{{a_h}}}} } \right)}^3}}}\sum\limits_{j \ne i} {\left( {\left( {{a_i} + {a_j}} \right)\left( {1 - 2\frac{{N - 1}}{{\sum\limits_{h \in \cal N} {\frac{{{p_h}}}{{{a_h}}}} }}\frac{{{p_j}}}{{{a_j}}}} \right)} \right)} - \frac{{2\frac{{N - 1}}{{{a_i}^2}}}}{{{{\left( {\sum\limits_{h \in \cal N} {\frac{{{p_h}}}{{{a_h}}}} } \right)}^3}}}\sum\limits_{j \ne i} {\left( {\left( {{a_i} + {a_j}} \right)\frac{{{p_j}}}{{{a_j}}}\left( {1 - \frac{{N - 1}}{{\sum\limits_{h \in \cal N} {\frac{{{p_h}}}{{{a_h}}}} }}\frac{{{p_i}}}{{{a_i}}}} \right)} \right)}\nonumber \\
&\le& \frac{{2\frac{{N - 1}}{{{a_i}^2}}\sum\limits_{h \ne i} {\frac{{{p_h}}}{{{a_h}}}} }}{{{{\left( {\sum\limits_{h \in \cal N} {\frac{{{p_h}}}{{{a_h}}}} } \right)}^3}}}\underbrace {\mathop {\sum\limits_{j \ne i} {\left( {\left( {{a_i} + {a_j}} \right)\left( {1 - \frac{N}{{\sum\limits_{h \in \cal N} {\frac{{{p_h}}}{{{a_h}}}} }}\frac{{{p_j}}}{{{a_j}}}} \right)} \right)} }\limits}_{ \le 0} - \frac{{2\frac{{N - 1}}{{{a_i}^2}}}}{{{{\left( {\sum\limits_{h \in \cal N} {\frac{{{p_h}}}{{{a_h}}}} } \right)}^3}}}\sum\limits_{j \ne i} {\left( {\left( {{a_i} + {a_j}} \right)\frac{{{p_j}}}{{{a_j}}}\underbrace {\left( {1 - \frac{{N - 1}}{{\sum\limits_{h \in \cal N} {\frac{{{p_h}}}{{{a_h}}}} }}\frac{{{p_i}}}{{{a_i}}}} \right)}_{ \ge 0}} \right)}.
\end{eqnarray}
\hrulefill
\end{figure*}
From Theorem~6, we know that $\Pi({\bf p})$ is concave on each $p_i$, when $\sum\limits_{i \ne j} {({a_i} + {a_j})\left( {1 - \frac{{N\frac{{{p_j}}}{{{a_j}}}}}{{\sum\limits_{j\in \cal N} {\frac{{{p_j}}}{{{a_j}}}} }}} \right)} \le 0$, and decreasing on each $p_i$ when $\sum\limits_{i \ne j} {({a_i} + {a_j})\left( {1 - \frac{{N\frac{{{p_j}}}{{{a_j}}}}}{{\sum\limits_{j\in \cal N} {\frac{{{p_j}}}{{{a_j}}}} }}} \right)} > 0$. In other words, when $\Pi({\bf p})$ is concave on $p_i$, $p_i$ needs to be smaller than a certain threshold, and $\Pi({\bf p})$ is decreasing on $p_i$ when $p_i$ is larger than this threshold. Then, it can be concluded that if the price is higher than the threshold, the miner is not willing to purchase the computing service from the CFP. Therefore, we know that the optimal value of profit of the CFP, i.e., $\Pi^*({\bf p})$ is achieved in the concave parts when $\sum\limits_{i \ne j} {({a_i} + {a_j})\left( {1 - \frac{{N\frac{{{p_j}}}{{{a_j}}}}}{{\sum\limits_{j \in \cal N} {\frac{{{p_j}}}{{{a_j}}}} }}} \right)} \le 0$. Clearly, the maximization of profit $\Pi({\bf p})$ is achieved either in the boundary of domain area or in the local maximization point. Since we know that the optimal value of profit, i.e., $\Pi^*({\bf p})$ is achieved in the interior area, and thus $\bf p^*$ exists. In the following, we prove that there exists at most one optimal solution by using Variational Inequality theory~\cite{scutari2010convex}, from which the uniqueness of the optimal solution, i.e., the Stackelberg equilibrium, follows.

Let the set ${\cal K}=\bigg \{ {\bf p} = [p_1, \ldots, p_N]^{\top}\bigg| {\sum\limits_{i \ne j} {({a_i} + {a_j})\left( {1 - \frac{{N\frac{{{p_j}}}{{{a_j}}}}}{{\sum\limits_{j \in \cal N} {\frac{{{p_j}}}{{{a_j}}}} }}} \right)} \le 0}, \forall i \in \cal N \bigg\}$. The constraint can be rewritten as follows:
\begin{equation}
\sum\limits_{i \ne j} {\left( {({a_i} + {a_j})\left( {\sum\limits_{h \in \cal N} {\frac{{{p_h}}}{{{a_h}}}} - N\frac{{{p_j}}}{{{a_j}}}} \right)} \right)} \le 0.
\end{equation}
Thus, we redefine the set $\cal K$ as $ \bigg \{ {\bf p}=[p_1, \ldots, p_N]^{\top}$ $\bigg| {\sum\limits_{i \ne j} {\left( {({a_i} + {a_j})\left( {\sum\limits_{h \in \cal N} {\frac{{{p_h}}}{{{a_h}}}} - N\frac{{{p_j}}}{{{a_j}}}} \right)} \right)} \le 0}, \forall i \in \cal N \bigg\}$. Then, we formulate an equivalent problem to~(\ref{Eq:Profit3}) as follows:
\begin{equation}\label{Eq:optimizationtoVI}
\begin{aligned}
& \underset{\bf p>0}{\text{minimize}}
& & -\Pi({\bf p}) \\
& \text{subject to}
& & {\bf p} \in {\cal K}.\\
\end{aligned}
\end{equation}
Let $F({\bf p})=\nabla \left( { - \Pi ({\bf{p}})} \right) = - {\left[ {{\nabla _{{p_i}}}\Pi } \right]^{\top}_{i \in \cal N}}$. Accordingly, the optimization problem in~(\ref{Eq:optimizationtoVI}) is equivalent to find a point set ${\bf p^*}\in \cal K$, such that $({\bf p} - {\bf p^*})F({\bf p^*})\ge 0 ,\forall {\bf p} \in \cal K$, which is the Variational Inequality (VI) problem: VI$({\cal K}, F)$.

\begin{definition}
If $F$ is strictly monotone on $\cal K$, then VI$({\cal K}, F)$ has at most one solution, where ${\cal K}\in \mathbb{R}^N$ is a convex closed set, and the mapping $F: {\cal K}\mapsto \mathbb{R}^N$ is continuous~\cite{scutari2010convex}.
\end{definition}

Let $\lambda \in (0,1)$, ${\bf p'}, {\bf p''} \in \cal K$, it can be concluded that $\lambda {\bf{p'}} + (1 - \lambda ){\bf{p'}}\in \cal K$, which is shown in~(\ref{Eq:VIconvexset}). Accordingly, $\cal K$ is a convex and closed set. To prove that the mapping $F: {\cal K}\mapsto \mathbb{R}^N$ is strictly monotone on $\cal K$, we check the positivity of $({\bf p'} - {\bf p''})^{\top}(F({\bf p'})-F({\bf p''})), \forall {\bf p'}, {\bf p''} \in \cal K$ and ${\bf p'} \ne {\bf p''}$. We know
\begin{multline}
({\bf{p'}} - {\bf{p''}})^\top (F({\bf{p'}}) - F({\bf{p''}})) = \\ \sum\limits_{i \in \cal N} {\left( { (p'_i - p''_i)\left({  -\left. {{\nabla _{{p_i}}}\Pi} \right|_{p_i=p'_i} + \left. {{\nabla _{{p_i}}}\Pi} \right|_{p_i=p''_i} }\right) } \right)},
\end{multline}
and from Theorem~6, we have
\begin{equation}
\frac{{{\partial ^2}\Pi ({\bf{p}})}}{{\partial {p_i}^2}} = \frac{{{\partial ^2}(f({\bf{p}}) + g({\bf{p}}))}}{{\partial {p_i}^2}} < 0.
\end{equation}
Thus, ${{\nabla _{{p_i}}}\Pi }$ is decreasing on each $p_i$, and $-{{\nabla _{{p_i}}}\Pi }$ is increasing on each $p_i$. It can be concluded that
\begin{equation}
- {\left. {{\nabla _{{p_i}}}\Pi } \right|_{{p_i} = {p'_i}}} + {\left. {{\nabla _{{p_i}}}\Pi } \right|_{{p_i} = {p''_i}}} = \left\{ {\begin{array}{*{20}{c}}
{ \ge 0,{p'_i} \ge {p''_i}}\\
{ < 0,{p'_i} < {p''_i}}	.
\end{array}} \right.
\end{equation}
Then, we have
\begin{equation}\label{Eq:VIconstraint}
{\left( { (p'_i - p''_i)\left({  -\left. {{\nabla _{{p_i}}}\Pi} \right|_{p_i=p'_i} + \left. {{\nabla _{{p_i}}}\Pi} \right|_{p_i=p''_i} }\right) } \right)} \ge 0, \forall i \in \cal N,
\end{equation}
and we know ${\bf p'} \ne {\bf p''}$, and accordingly there exists at least one $j \in \cal N$ which satisfies the constraint in~(\ref{Eq:VIconstraint}). Therefore, we have proved that $F$ is strictly monotone on $\cal K$ and continuous. Until now, we have proved that VI$({\cal K}, F)$ has at most one solution according to Definition~4 in~\cite{scutari2010convex}. Thus, the equivalent problem admits at most one optimal solution. Since we know the existence of a single optimal solution, and thus the uniqueness of the optimal solution is validated. The proof is now completed.
\end{proof}

Similar to that in Section~\ref{SubSec:Uniform}, we can apply the low-complexity gradient based searching algorithm to achieve the maximized profit $\Pi({\bf p})$ of the CFP. In particular, we adopt Algorithm~1 to obtain the unique Stackelberg equilibrium, under which the CFP achieves the profit maximization according to Theorem~7. The basic description is explained as follows: for the given prices imposed by the CFP, the followers' sub-game is solved first. After substituting the best responses of the followers' sub-game into the leader sub-game, the optimal prices can be obtained by a gradient-based algorithm. The similar algorithm can be used for uniform pricing as well.
\begin{algorithm}\scriptsize
 \caption{Gradient iterative algorithm to find Stackelberg equilibrium under discriminatory pricing}
 \begin{algorithmic}[1]
 \STATE \textbf{Initialization:} \\
 Select initial input ${\bf p}= [p_i]_{i\in \cal N}$ where $p_i \in [0, \overline p]$, $k\leftarrow 1$, precision threshold $\varepsilon$;
 \REPEAT
  \STATE Each miner $i$ decides its computing service demand $x_i^{[k]}$ based on~(\ref{Eq:BestResponse});
  \STATE CFP updates the prices using a gradient assisted searching algorithm, i.e.,
  \begin{equation}
  {\bf p}(t + 1) = {\bf p}(t) + \mu \nabla \Pi ({\bf p}(t)),
  \end{equation}
  where $\mu$ is the step size of the price update and $ \mu \nabla \Pi ({\bf p}(t))$ is the gradient with $\frac{{\partial \Pi ({\bf{p}}(t))}}{{\partial {\bf{p}}(t)}}$. The price information is sent to all miners;
 \STATE $k\leftarrow k + 1 $;
 \UNTIL{$\frac{{{{\left\| {{{{{\bf p}}}^{[k]}} - {{{{\bf p}}}^{[k- 1]}}} \right\|}_1}}}{{{{\left\| {{{{{\bf p}}}^{[k-1]}}} \right\|}_1}}} < \varepsilon$}\\
 \STATE \textbf{Output:} optimal demand ${\bf{x^*}}^{[k]}$ and optimal price ${\bf p}^{*{[k]}}$.
 \end{algorithmic}\label{algorithm}
\end{algorithm}
\begin{figure}[t]
\centering
\includegraphics[width=.4\textwidth]{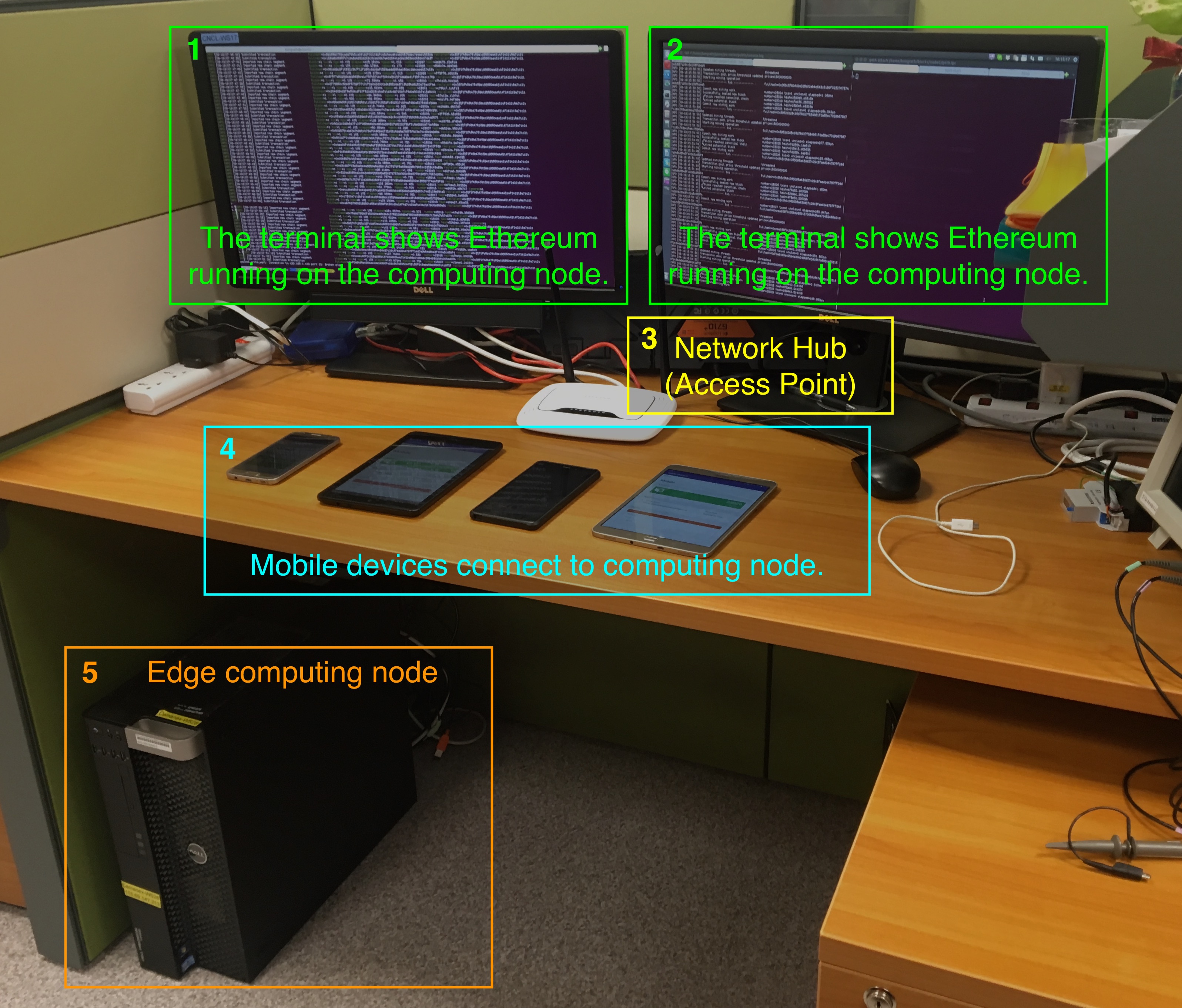}
\caption{\footnotesize{Real mobile blockchain mining experimental setup with Ethereum which is a popular open ledger.}}\label{Fig:Experiment}
\end{figure}

\begin{figure}[t]
\centering
\includegraphics[width=.4\textwidth]{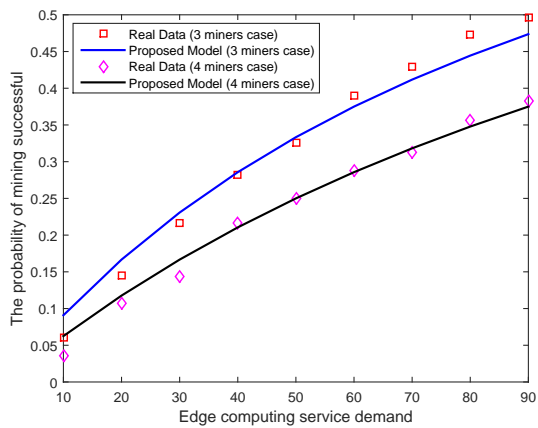}
\caption{\footnotesize{The comparison of real experiment results with our proposed model.}}\label{Fig:Comparison}
\end{figure}

\section{Performance Evaluation}\label{Sec:Simulation}
In this section, we first perform the real experiment on the PoW-based blockchain mining to validate the proposed utility function of the miner. Then, we conduct the extensive numerical simulations to evaluate the performance of our proposed price-based computing resource management to support blockchain application involving PoW.

\subsection{Environmental Setup}\label{Subsec:RealEnvironment}

We first set up the real blockchain mining experiment based on Ethereum and consider the smart phones as limited devices, as illustrated in Fig.~\ref{Fig:Experiment}. The experiment is performed on a workstation with Intel Xeon CPU E5-1630, and android devices (smart phones) installing a mobile blockchain client application. The mobile blockchain client application is implemented by the Android Studio and Software Development Kits (SDK) tools. All transactions are created by the mobile blockchain client application\footnote{In our experiment, each mobile device sends transactions to the server, and the size of each transaction is around 1 kilobyte~\cite{suankaewmanee2018performance}. Then, the server will collect and pack all the transactions into a block and proceed to solve the proof-of-work puzzle, where each block consists of block information and hash numbers. As mentioned in the paper, the number of transactions in each mined block is 10, and thus the size of the data from mobile device sent to the server is approximately 10 kilobytes in total. Likewise, the size of a block including 10 hash numbers that is sent from the server to the mobile device is around 1.5 kilobytes. The detailed description can be found in our previous work~\cite{suankaewmanee2018performance}.}. Each miner's working environment has one CPU core as its processor. The miner's processor and its CPU utilization rate are generated and managed by the Docker platform~\cite{Docker}. The mobile device of each miner has installed Ubuntu 16.04 LTS (Xenial Xerus) and Go-Ethereum~\cite{Ethereum} as the operation system and the blockchain framework, respectively.

In Fig.~\ref{Fig:Experiment}, from Box $1$ and $2$, the screen of computer terminal shows that the Ethereum is running on the host, i.e., edge device (Box $5$). The mobile devices in Box $4$ are connected to the edge computing node through network hub (Box $3$) using mobile blockchain client application. The basic steps can be implemented as follows. The mobile users, i.e., miners use the Android device to connect to the edge computing node through network hub, i.e., access point. Then, the miners can request the service from edge node, and mine the block with the assistance of Ethereum service provided accordingly.

We create $1000$ blocks employing Node.js and use the mobile device to mine these blocks in the experiment. We consider two cases with three miners and four miners. In the three-miner case, we first fix the other two miners' service demand (CPU utilization) at $40$ and $60$, and then vary one miner's service demand. In the four-miner case, we first fix other three miners' service demand as $40$, $50$ and $60$, and then vary one miner's service demand. For our experiment, the number of transactions in each mined block is $10$, i.e., the size of block is the same. The comparison of the real experimental results and our proposed analytical model is shown in Fig.~\ref{Fig:Comparison}. As expected, there is not much difference between the real results and our analytical model. This is because the probability that the miner successfully mines the block is directly proportional to its relative computing power when the block size are identical. Note that the delay effects are negligible. In the sequel, we present the numerical results to evaluate the performance of the proposed price-based computing resource management for supporting blockchain application involving PoW.

\subsection{Numerical Results}\label{Subsec:Numerical}

To illustrate the impacts of different parameters from the proposed model on the performance, we consider a group of $N$ miners, e.g., mobile users in the blockchain application involving PoW assisted by the CFP. We assume the size of a block mined by miner $i$ follows the normal distribution ${\cal N}(\mu_t, \sigma^2)$. The default parameter values are set as follows: $\underline x = 10^{-2}$, $\overline x = 100$, $\overline p =100$, $\mu_t = 200$, $\sigma^2 = 5$, $R = 10^4$, $r = 20$, $z = 5 \times 10^{-3}$, $c=10^{-3}$ and $N=100$. Further, we employ the `fix' function in MATLAB to round each $t_i$ to the nearest integer toward zero. Note that some of these parameters are varied according to the evaluation scenarios. We evaluate the performance of uniform pricing and discriminatory pricing in the following.

\subsubsection{Investigation on total service demand of miners and the profit of the CFP}\label{Subsubsec:demandandprofit}
\paragraph{The comparison of uniform pricing and discriminatory pricing}\label{Subsubsubsec:comparison}
\begin{figure}[t]
\centering
\includegraphics[width=.4\textwidth]{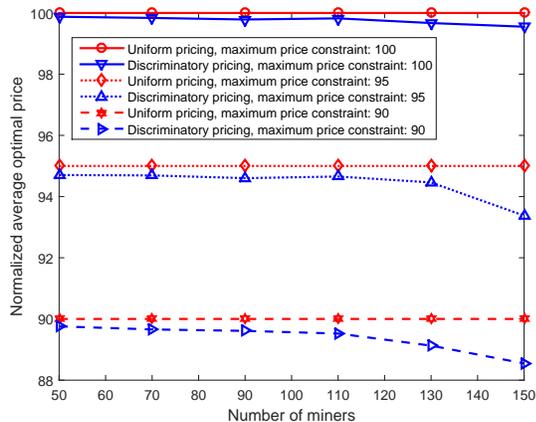}
\caption{\footnotesize{Normalized average optimal price versus the number of miners.}}\label{Fig:price_comparison}
\end{figure}

We first address the comparison of uniform pricing and discriminatory pricing schemes. Figure.~\ref{Fig:price_comparison} demonstrates the comparison of the normalized average optimal price under two proposed pricing schemes. It is worth noting that the optimal price under uniform pricing is the same as the maximum price, which can be explained by~(\ref{Eq:FirstOrderProfit}). Specifically, the expression in~(\ref{Eq:FirstOrderProfit}) is always positive, and thus the profit of the CFP increases with the increase of price. This means that the maximum price is the optimal value for profit maximization of the CFP under uniform pricing. Thus, we have the following conclusion: the CFP intends to set the maximum possible value as optimal price under uniform pricing. This conclusion is still useful even when the CFP does not have the complete information about the miners.

Further, we find that the average optimal price of discriminatory pricing is slightly lower than that of uniform pricing. The intuition is that, under under discriminatory pricing, the CFP can set different unit prices of service demand for different miners. For the details of operation of discriminatory pricing, we conduct the case study in Section~\ref{Subsubsec:price}. In this case, the CFP can significantly encourage the higher total service demand from miners and achieve greater profit gain under discriminatory pricing, which is also consistent with the following results. As shown in Figs.~\ref{Fig:utility_profit_number}-\ref{Fig:utility_profit_delay}, in all cases, the total service demand from miners and the profit of the CFP under the uniform pricing scheme is slightly smaller than that under the discriminatory pricing scheme.

From Fig.~\ref{Fig:utility_profit_number}, we find that when $\sigma^2$ decreases, the results under uniform pricing scheme is close to that under discriminatory pricing. This is because the heterogeneity of miners in blockchain is reduced as $\sigma^2$ decreases. We may consider one symmetric case, where the miners are homogeneous with the same size of blocks to mine, i.e., $\sigma^2 = 0$. In this case, the discriminatory pricing scheme yields the same results as those of the uniform pricing scheme.
%
%
\begin{figure*}
\begin{minipage}[t]{0.328\textwidth}
\centering
\includegraphics[width=2.2in]{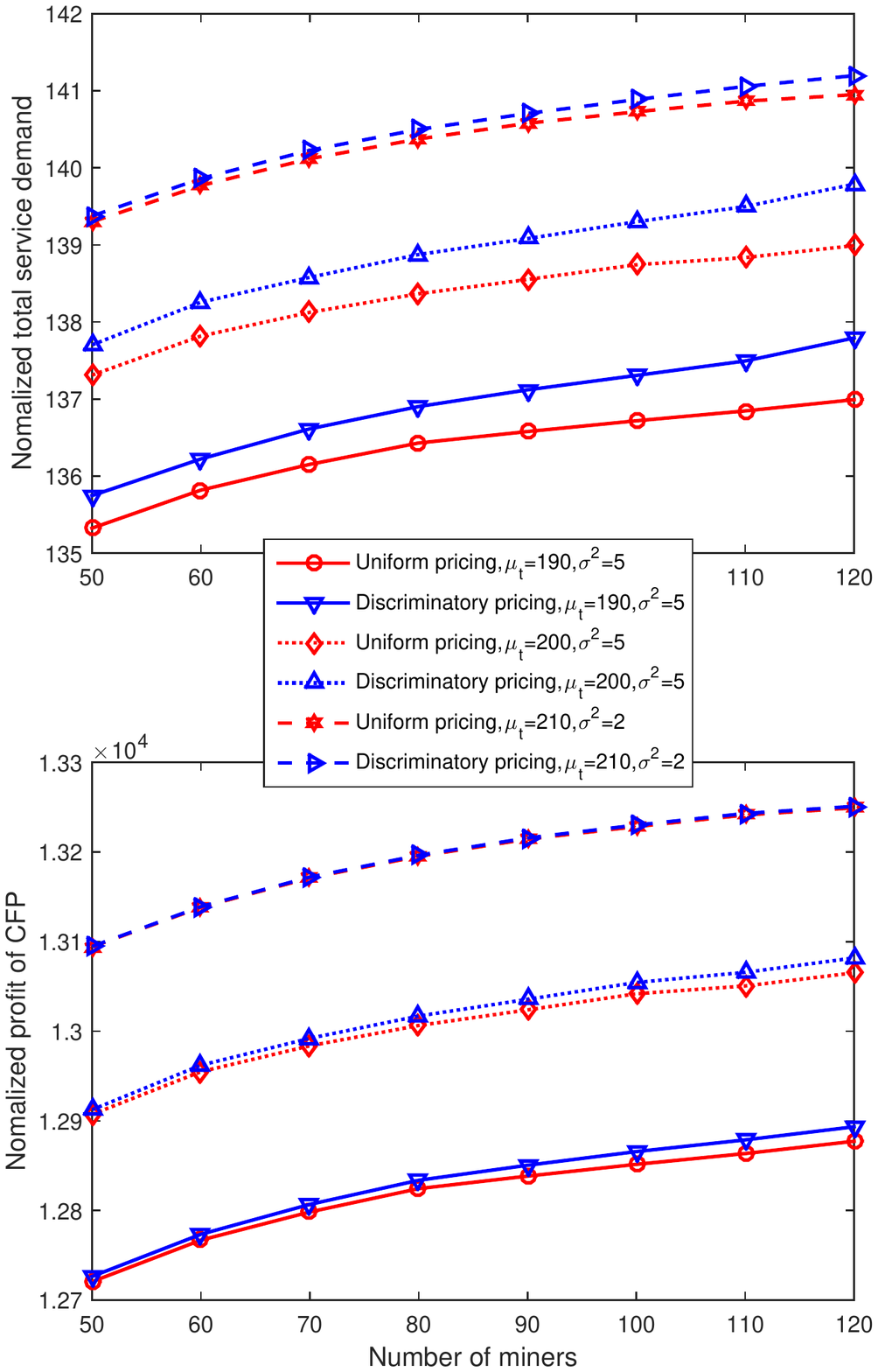}
\caption{\footnotesize{Normalized total service demand of miners and the profit of the CFP versus the number of miners.}}\label{Fig:utility_profit_number}
\end{minipage}
\begin{minipage}[t]{0.328\textwidth}
\centering
\includegraphics[width=2.2in]{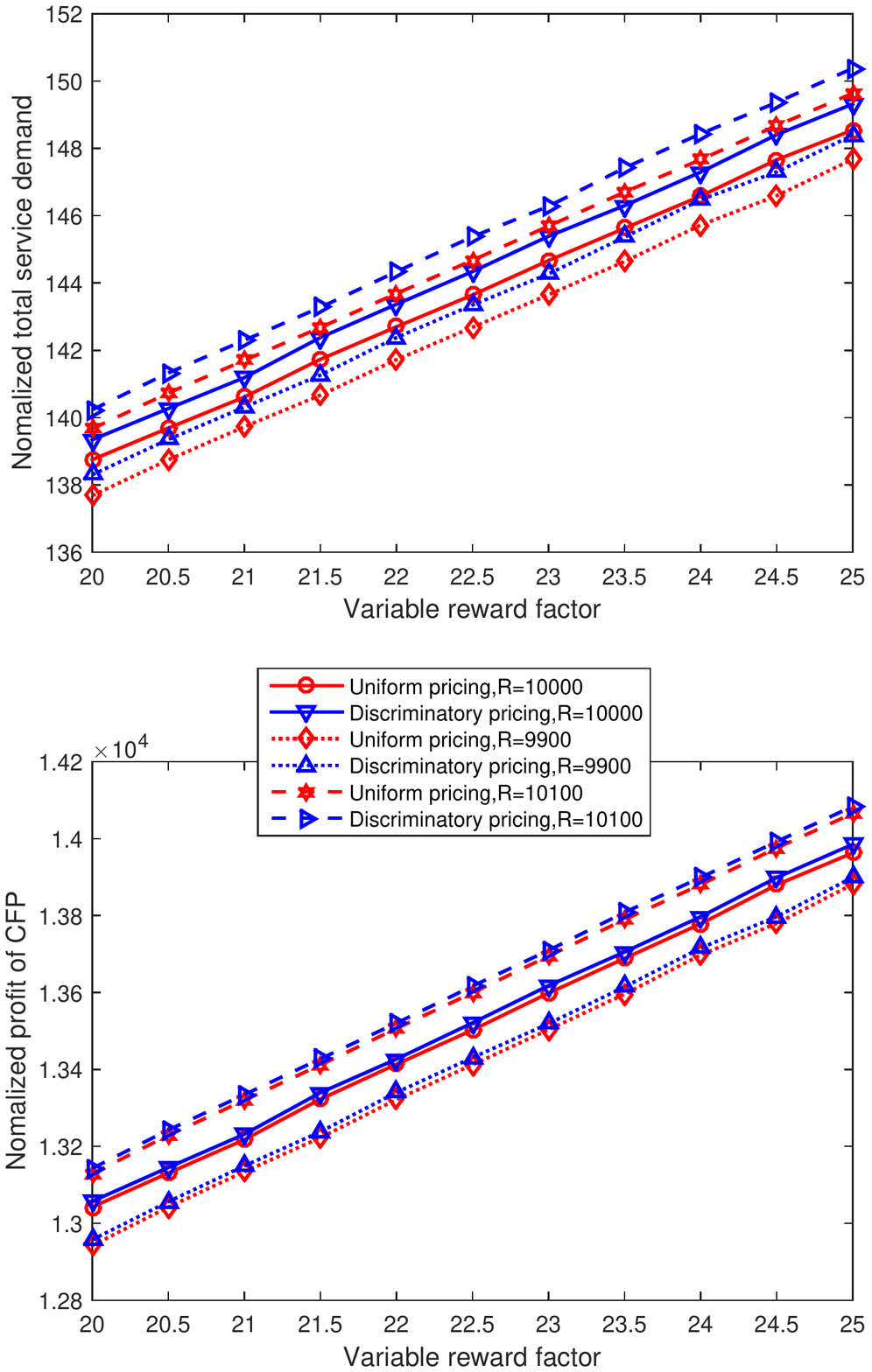}
\caption{\footnotesize{Normalized total service demand of miners and the profit of the CFP versus the variable reward factor.}}\label{Fig:utility_profit_reward}
\end{minipage}
\begin{minipage}[t]{0.328\textwidth}
\centering
\includegraphics[width=2.2in]{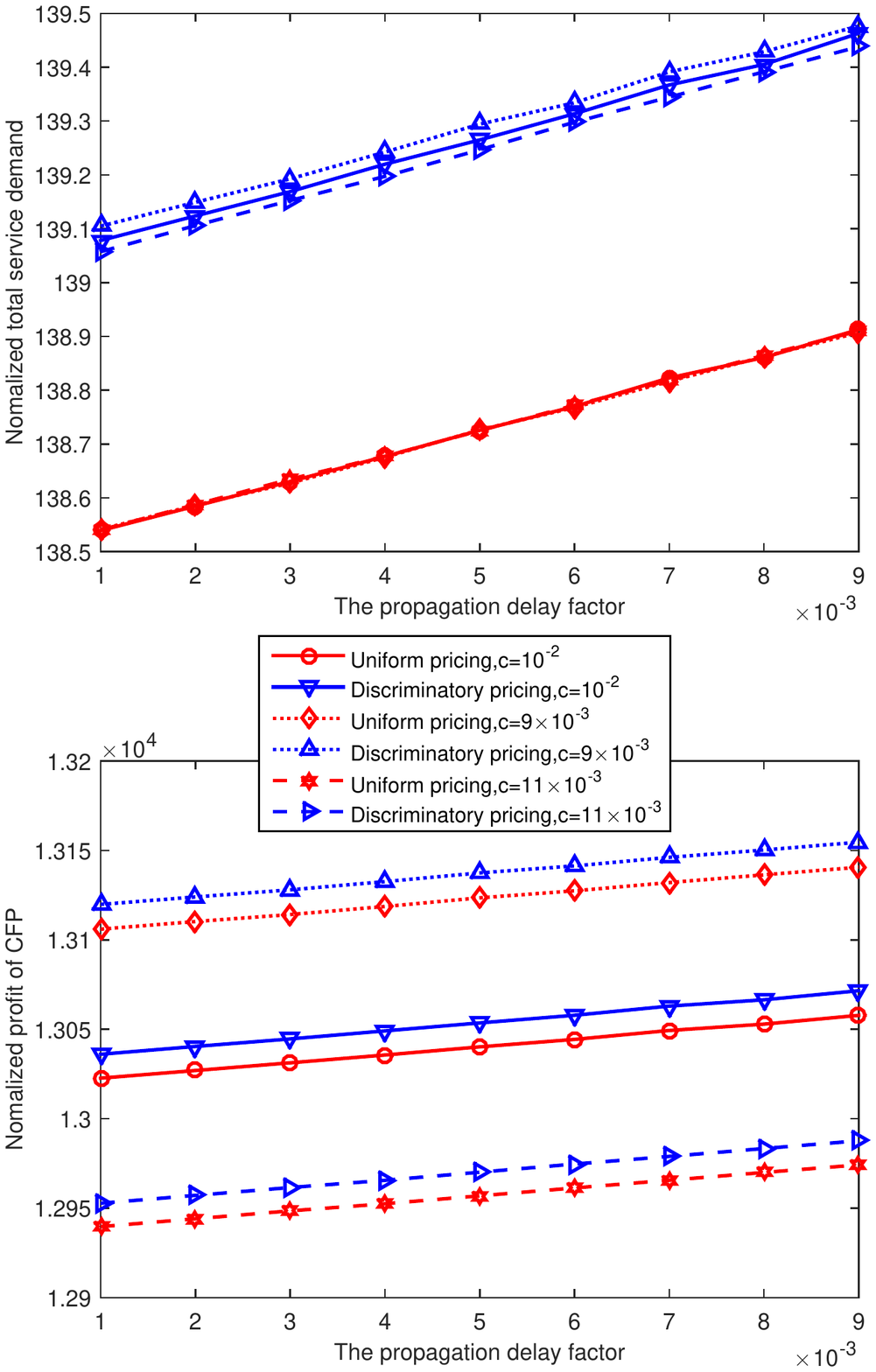}
\caption{\footnotesize{Normalized total service demand of miners and the profit of the CFP versus the propagation delay factor.}}\label{Fig:utility_profit_delay}
\end{minipage}
\end{figure*}
\paragraph{The impacts of the number of miners}\label{Subsubsubsec:number}

We next evaluate the impacts brought by the number of miners, and the results are shown in Fig.~\ref{Fig:utility_profit_number}. From Fig.~\ref{Fig:utility_profit_number}, we find that the total service demand of miners and the profit of the CFP increase with the increase of the number of miners in blockchain. This is due to the fact that having more miners will intensify the competition among the miners, which potentially motivates them to have higher service demand. Further, the coming miners have their service demand, and thus the total service demand from miners is increased. In turn, the CFP extracts more surplus from miners and thereby has greater profit gain. Additionally, it is observed that the rate of service demand increment decreases as the number of miners increases. This is from the fact that the incentive of miners to increase their service demand is weakened because the probability of their successful mining is reduced when the number of miners is increasing. Comparing different results, it is also observed that the total service demand of miners and the profit of the CFP increase as $\mu_t$ increases. This is because when $\mu_t$ increases, i.e., the average size of one block becomes larger, the variable reward for each miner also increases. The potential incentive of miners to increase their service demand is improved, and accordingly the total service demand of miners increases. Consequently, the CFP achieves greater profit gain.

\paragraph{The impacts of reward for successful mining}\label{Subsubsubsec:reward}

Then, we investigate the impacts of variable reward and fixed reward on miners and the CFP, which are shown in Fig.~\ref{Fig:utility_profit_reward}. It is observed that with the increase of variable reward factor, both the total service demand of miners and the profit of the CFP increase. This is from the fact that the increased variable reward enhances the motivation of miners for higher service demand, and the total service demand is enhanced accordingly. As a result, the CFP achieves greater profit gain. Further, by comparing curves with different value of fixed reward, we find that as the fixed reward increases, the total service demand of miners and the profit of the CFP also increase. Similarly, this is because the increased fixed reward induces greater incentive of miners, which in turn improves the total service demand of miners and the profit of the CFP.

\paragraph{The impacts of propagation delay}\label{Subsubsubsec:delay}

At last, we examine the impact of propagation delay on miners and the CFP, as illustrated in Fig.~\ref{Fig:utility_profit_delay}. It is observed that as the propagation delay factor increases, the total service demand and the profit of the CFP increase. This is because when the propagation delay effects are strong, the miners with larger mined block need to have higher service demand to reduce the propagation delay of their propagated solutions. At the same time, a miner with smaller mined block is also incentivized from the demand competition with the other miners. Therefore, the total service demand increases, which in turn improves the profit of the CFP. Additionally, we observe that as the value of service cost factor increases, the total service demand decreases under discriminatory pricing and remains unchanged under uniform pricing. On the contrary, the profit of the CFP increases in both schemes. Recall from Fig.~\ref{Fig:price_comparison}, the reason is that the optimal price under uniform pricing remains unchanged from varying the value of service cost factor, and thus the service demand remains unchanged under uniform pricing. Correspondingly, the CFP achieves greater profit gain from the lower cost under uniform pricing. However, under discriminatory pricing, when the service cost decreases, the CFP has an incentive to set lower price for some miners to encourage higher total service demand. On the contrary, when the value of service cost factor increases, the CFP has no incentive to set lower price for these miners, since the higher total service demand results in higher cost for the CFP. Therefore, as the value of service cost factor decreases, the total service demand and the profit of CFP increase.
\subsubsection{Investigation on optimal price under uniform and discriminatory pricing schemes}\label{Subsubsec:price}
Then, to explore the impacts of discriminatory pricing on each specific miner, we investigate the optimal price and resulting individual computing service demand from miners. We conduct a case study for three-miner mining with the following parameters: $t_1=100$, $t_2=200$, $t_3=300$, $\underline x = 10^{-2}$, $\overline x = 100$, $\overline p =100$, $R = 10^4$, $r = 20$, $z = 5 \times 10^{-3}$, and $c=10^{-3}$.

As expected, we observe from Figs.~\ref{Fig:Price_three_miners_R} and~\ref{Fig:Price_three_miners} that the optimal price charging to the miners with the smaller block is lower, e.g., miners $1$ and $2$. This is because the variable reward of miners $1$ and $2$ for successful mining is smaller than that of miner $3$. Thus, the miners $1$ and $2$ have no incentive to pay a high price for their service demand as miner $3$. In this case, the CFP can greatly improve the individual service demand of miners $1$ and $2$ by setting lower prices to attract them, as illustrated in Figs.~\ref{Fig:IndividualDemand_R} and~\ref{Fig:IndividualDemand}. Due to the competition from other two miners, the miner $3$ also has the potential incentive to increase its service demand. However, due to the high service unit price, as a result, the miner $3$ reduces its service demand for saving cost. Nevertheless, the increase of service demand from miners $1$ and $2$ are greater. Therefore, the total service demand and the profit of the CFP are still improved under discriminatory pricing compared with uniform pricing.

\begin{figure}[t]
\centering
\includegraphics[width=0.35\textwidth]{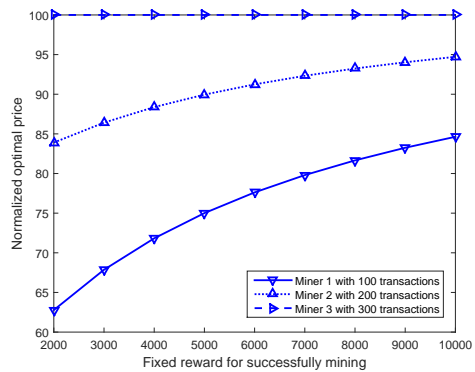}
\caption{Normalized optimal price versus the fixed reward for mining successfully under discriminatory pricing.}\label{Fig:Price_three_miners_R}
\end{figure}

\begin{figure}[t]
\centering
\includegraphics[width=0.35\textwidth]{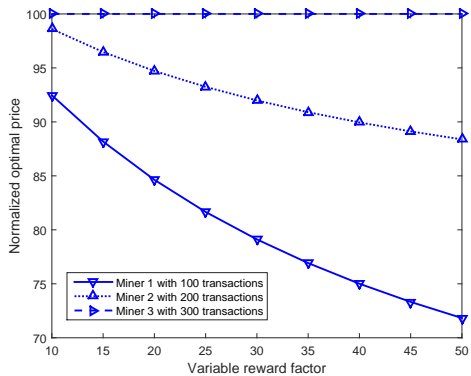}
\caption{Normalized optimal price versus the variable reward factor under discriminatory pricing.}\label{Fig:Price_three_miners}
\end{figure}

\begin{figure}[t]
\centering
\includegraphics[width=0.35\textwidth]{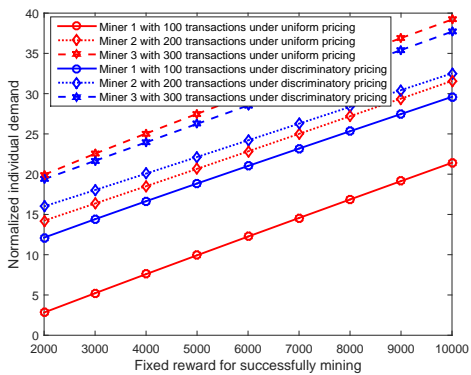}
\caption{Normalized individual demand versus the fixed reward for mining successful.}\label{Fig:IndividualDemand_R}
\end{figure}

\begin{figure}[t]
\centering
\includegraphics[width=0.35\textwidth]{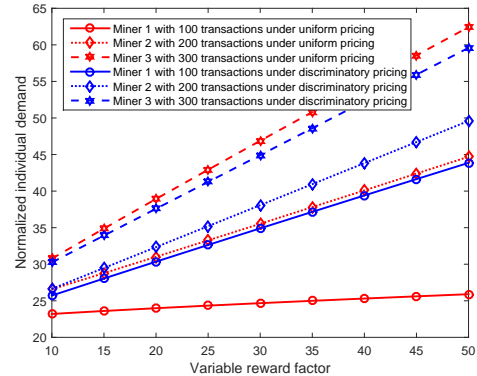}
\caption{Normalized individual demand versus the variable reward factor.}\label{Fig:IndividualDemand}
\end{figure}

Further, from Fig.~\ref{Fig:Price_three_miners_R}, we observe that the optimal prices for miners $1$ and $2$ increase with the increase of fixed reward. This is because as the fixed reward increases, the incentives of miners $1$ and $2$ to have higher service demand is greater. In this case, the CFP is able to raise the price and charge more for higher revenue, and thus achieves greater profit. Therefore, for each miner, the individual service demand increases as the fixed reward increases, as shown in Fig.~\ref{Fig:IndividualDemand_R}. Additionally, we observe from Fig.~\ref{Fig:Price_three_miners} that the optimal prices for miners $1$ and $2$ decrease as the variable reward factor increases. This is because when the variable reward factor increases, the incentive of each miner to have higher service demand is greater. However, the incentives of the miners with smaller block to mine, i.e., the miners $1$ and $2$ are still not much as that of miner $3$, and become smaller than that of miner $3$ as the variable reward factor increases. Therefore, the CFP intends to set the lower price for miners $1$ and $2$ which may induce more individual service demand as shown in Fig.~\ref{Fig:IndividualDemand}.

Note that the Stackelberg game of the edge/fog computing service for blockchain aims
at maximizing the profit of the CFP. Alternatively, social welfare, i.e., utility of miners, are
also important and should be maximized. As such, auction~\cite{jiang2015data} is a suitable tool to achieve this objective in which some preliminary modeling and results are presented in~\cite{jiao2017blockchain}.

\section{Conclusion}\label{Sec:Conclusion}

In this paper, we have investigated the price-based computing resource management, for supporting offloading mining tasks to cloud/fog provider in proof-of-work based public blockchain networks. In particular, we have adopted the two-stage Stackelberg game model to jointly study the profit maximization of cloud/fog provider and the utility maximization of miners. Through backward induction, we have derived the unique Nash equilibrium point of the game among the miners. The optimal resource management schemes including the uniform and discriminatory pricing for the cloud/fog provider have been presented and examined. Further, the existence and uniqueness of the Stackelberg equilibrium have been proved analytically for both pricing schemes. We have performed the real experiment to validate the proposed analytical model. Additionally, we have conducted the numerical simulations to evaluate the network performance, which help the cloud/fog provider to achieve optimal resource management and gain the highest profit. For the future work, we will further study the oligopoly market with multiple cloud/fog providers, where providers compete with each other for selling computing services to miners. Another direction is to study the optimal strategies of the provider and miners with the consideration of cyber-attacks, such as~\cite{feng2018cyber}.

\section*{Acknowledgement}
This work was supported in part by WASP/NTU M4082187 (4080), Singapore MOE Tier 1 under Grant 2017-T1-002-007 RG122/17, MOE Tier 2 under Grant MOE2014-T2-2-015 ARC4/15, NRF2015-NRF-ISF001-2277, EMA Energy Resilience under Grant NRF2017EWT-EP003-041, and in part by US MURI, NSF CNS-1717454, CNS- 1731424, CNS-1702850, CNS-1646607,and ECCS-1547201.

\bibliography{bibfile}

\end{document}